\newif\ifdraft
\theoremstyle{plain}
\newtheorem{theorem}{Theorem}
\newtheorem{lemma}[theorem]{Lemma}
\newtheorem{proposition}[theorem]{Proposition}
\newtheorem{corollary}[theorem]{Corollary}
\theoremstyle{definition}
\newtheorem{Definition}[theorem]{Definition}
\theoremstyle{remark}
\renewcommand{\Pr}[1]{\mathds{P}\!\left[#1\right]}
\renewcommand{\epsilon}{\varepsilon}
\definecolor{darkred}{rgb}{0.5,0,0}
\definecolor{lightblue}{rgb}{0,0.4,0.8}
\definecolor{darkgreen}{rgb}{0,0.5,0}
\definecolor{linkblue}{rgb}{0.05,0.20,0.45}
\renewcommand{\maketitle}{%
  \begingroup
  \thispagestyle{plain}%
  \raggedright
  {\Large\bfseries \@title\par}%
  \vspace{0.70em}%
  {\normalsize \@author\par}%
  \ifx\@date\@empty\else
    \vspace{0.25em}{\normalsize \@date\par}%
  \fi
  \vspace{0.80em}%
  \endgroup
}
\newcommand{\compacttoc}{%
  \par\vspace{0.65em}%
  \begin{center}{\sffamily\bfseries Contents}\end{center}%
  \vspace{-0.80em}%
  \@starttoc{toc}%
}
\renewenvironment{abstract}{%
  \par\vspace{.65em}%
  \begin{center}{\sffamily\bfseries Abstract}\end{center}%
  \vspace{-0.55em}%
  \small\noindent\ignorespaces%
}{\par\normalsize\vspace{.45em}}
\newenvironment{keywords}{%
  \par\smallskip\noindent{\small\bfseries Additional Key Words and Phrases: }\small\ignorespaces%
}{\par\normalsize\vspace{.45em}}
\title{Decision-Making Under Complete Uncertainty: You Will Not Regret Being Greedy}
\author{%
	Kristijan Atanasov\textsuperscript{1}, Frederik Mallmann-Trenn\textsuperscript{1}, and Mehmet Mars Seven\textsuperscript{2} 
}
\date{\footnotesize\textsuperscript{\textbf{1}}Department of Informatics, King's College London\\ \textsuperscript{\textbf{2}}Department of Political Economy, King's College London}
\begin{document}
\maketitle

\begin{keywords}
greedy strategy; worst-case regret; game theory; decision-making under uncertainty
\end{keywords}

\begin{abstract}
In this paper, we propose a game-theoretic model to study the properties of the worst-case regret of the greedy strategy under complete (Knightian) uncertainty. In a game between a decision-maker (DM) and an adversarial agent (Nature), Nature chooses an unknown state determining the distribution of ratings for each product. The DM observes a realization of product ratings and then chooses a product according to a strategy. For arbitrary numbers of products and ratings, we first study the equal-observations case in which every product has the same number of observations. In this benchmark, we establish matching upper and lower bounds on the worst-case regret, showing that the regret vanishes as the number of observations increases and that the greedy strategy is rate-optimal up to universal constants. In the special case with two products and two ratings, we show that with one observation per product the greedy strategy is minimax-optimal with respect to worst-case regret. We then allow products to have different numbers of observations. Greedy remains robust in a conservative sense: its worst-case regret is controlled by the least-reviewed product. However, unequal numbers of observations can also change greedy's exact worst-case behavior. In particular, adding observations for only one product can increase greedy's worst-case regret. Finally, we test the model on data collected from Google reviews for restaurants, showing that the greedy strategy's empirical performance closely aligns with the theoretical findings.
\end{abstract}

\setcounter{tocdepth}{1}
\compacttoc
\newpage

\section{Introduction}\label{section:intro_section}

Suppose you select a restaurant based on its average observed rating. How big is the regret if you simply pick the restaurant with the highest average rating?
We introduce a game-theoretic model to study the worst-case regret of greedy selection under Knightian uncertainty. In a game against Nature, the decision-maker (DM) observes a realization of a stochastic ratings matrix (i.e., sampled product ratings for each product). Upon observation, the DM chooses a product according to a strategy (pure or mixed), which maps each observation to a probability distribution over the set of products.
Nature chooses a state, that is, a stochastic matrix of probability distributions over the product ratings, which is unobservable to the DM. 
We allow Nature to choose any admissible state and the DM to use any possibly randomized strategy.

The basic structure of our framework dates back to Abraham Wald's maximin model \cite{Wald1950} and Leonard Savage's minimax regret model \cite{Savage1951}, both of which are based on worst-case principles. 
Wald's maximin criterion prescribes choosing the strategy that maximizes the minimum payoff, while Savage's criterion prescribes choosing the strategy that minimizes the maximum regret. The connection to our setting is that the DM tries to minimize the regret against the worst-case probability distribution.

In this paper, we study arguably the most natural strategy the DM can employ: the \textit{greedy} strategy, where the DM picks the product with the highest average rating and breaks ties uniformly at random. We measure how the greedy strategy performs in different scenarios, focusing on the worst-case. We track two metrics, the payoff that the DM receives from the game and the regret, which is defined as the difference between the expected payoff of the best possible alternative and the DM's product choice, given the true state.

\subsection{Results Overview}

We present several results on the regret of the greedy strategy, including both theoretical bounds and empirical application using Google restaurant reviews.

First, for the case with equal number of observations, for any number of products and any number of ratings, we provide an upper bound on the greedy strategy's worst-case regret and show that it vanishes as the number of observations grows (Proposition~\ref{thm:greedy-upper}). We complement this with a lower bound on worst-case regret, showing that no strategy can achieve a fundamentally faster decay (Proposition~\ref{lem:minimax-lower}). Together, these results imply that the greedy strategy achieves the minimax regret rate up to universal constants (Theorem~\ref{cor:minimax_optimal_greedy}).

Second, we consider a special case with two products and two ratings and prove that the greedy strategy incurs at most $\frac{1}{8}$ expected regret after a single observation per product, regardless of the distribution (Proposition ~\ref{proposition:regretm1}). We further show that, with respect to the worst-case regret, the greedy strategy is minimax-optimal and guarantees the lowest possible regret  (Corollary~\ref{cor:greed_optimal}). 

Third, we numerically show in Section~\ref{section:regret_m12} that the regret of the greedy strategy decreases quickly as the number of observations increases from 2 to 20 for each product.

Fourth, we relax the equal-observation assumption and allow the number of observations to differ across products. We show that the finite-sample regret upper bound for greedy continues to hold, constrained by the number of observations of the least observed product (Proposition~\ref{prop:het_up_bound}). \newline
However, the heterogeneous observations case also exhibits a nonmonotonicity that is absent from the equal-observations model. In the two-product, two-rating case, we show that adding observations for only one product can increase greedy's worst-case regret (Proposition~\ref{prop:ones_regret}).

Finally, using a dataset of 1.5 million Google restaurant reviews \cite{YHLZM2022}, we benchmark the greedy strategy on sampled ratings. The empirical experiment is an equal-sampling experiment: every restaurant is sampled with the same number of reviews. The greedy strategy attains lower mean regret than both the uniform strategy and the Thompson Sampling algorithm for every tested pair of (number of products, number of observations). Its regret also declines as the number of observations increases, in line with the equal observations numerical illustration in Section~\ref{section:regret_m12}.

\subsection{Related Work}
\noindent Under complete uncertainty about the underlying distribution, the greedy strategy makes the locally optimal choice for each observation. Greedy algorithms \cite{CLRS1990,Wang2023} are ubiquitous in optimization and learning, but their performance is highly model-dependent. In our setting, greedy selects the product with the highest empirical value from a fixed batch of discrete ratings, and we analyze its worst-case regret under complete uncertainty about the underlying rating distributions.

\noindent The greedy strategy in this paper corresponds to the Randomized Greedy Algorithm considered by Blum and Mansour in 2007 \cite{BM2007}, where they break ties between equal best actions uniformly at random. Their framework evaluates the strategies in a repeated game and measures regret as a function of time (the number of steps taken in the repeated game), whereas our setting is a one-shot selection problem with a batch of observations. Rather than cumulative regret, our objective is measuring worst-case expected regret after observing a fixed batch of ratings. Within the equal-observations model, where every product has the same number of observations, we derive matching upper and lower bounds on greedy's worst-case expected regret.

\noindent The greedy strategy that we are analysing in this paper also closely relates to research by Mahdian et al. \cite{MJK2022} where they measure the regret of the greedy strategy when used to make decisions from noisy observations. Their findings show that the regret incurred by the greedy algorithm can surpass the optimal regret by an unbounded margin, even in cases where the noises are unbiased; also applying to the two product case studied in our paper. Our results complement this perspective, such that product values are inferred from discrete past ratings, and we give a rate-tight worst-case characterization. In particular, in the equal-observations benchmark, we show that the worst-case expected regret vanishes as the common number of observations increases and prove a matching lower bound showing that this decay rate cannot be fundamentally improved by any other strategy.

\noindent Our motivation follows Savage's minimax regret criterion \cite{Savage1951}, developed to minimize worst-case regret when the decision maker is unwilling to commit to a probabilistic prior. Similar in spirit to Wald's maximin model \cite{Wald1950}, it has often been used to formalize choice under uncertainty by evaluating performance against an adversarially chosen state of Nature. While these methods also do not require knowledge of the underlying Nature distribution, they require a set of possible outcomes and then optimize against the worst case within that set.\footnote{Relatedly, Ismail \cite{ismail2025} proposes the optimin criterion, another robustness concept that coincides with Wald's criterion in zero-sum games. See also Renou and Schlag \cite{renou2010} who propose an equilibrium notion based on minimax regret.}

\noindent A closely related line of work studies treatment choice under minimax regret. Manski~\cite{manski2004statistical} analyzes statistical treatment rules with experimental data and derives finite-sample upper bounds for empirical-success rules and their covariate-conditioned variants. Kitagawa and Tetenov~\cite{kitagawa2018who} extend this perspective via empirical welfare maximization over a feasible policy class. Our setting can be viewed as the no-covariate, finite-action counterpart of both papers: when the feasible policies are constant actions and there is no covariate-dependent assignment, the relevant empirical rule is to choose the action with the highest sample-average outcome. This specialization also permits a sharper worst-case characterization than is available in that broader treatment-choice framework: in the equal-observation one-shot finite-action setting, we derive matching upper and lower worst-case regret bounds, so that the rate is pinned down up to constants, and we show that greedy is minimax-rate optimal.

\noindent For equal number of observations across products, our results are related to a literature on exact finite-sample minimax treatment rules. Schlag and Stoye analyze two-treatment problems \cite{schlag2006eleven,stoye2009minimax}, while Chen and Guggenberger show that, under balanced assignment and binary outcomes, uniform selection among the treatments with the most observed successes is minimax-regret optimal for any finite number of treatments \citep{chen2026minimax}. Our two-product, two-rating benchmark establishes an analogous minimax conclusion in our product selection model. More broadly, our contribution in the equal-sample setting is to characterize the distribution-free minimax regret rate of greedy empirical-average selection for an arbitrary number of ratings and products. We then study the greedy strategy under deterministic unequal numbers of observations.

\noindent Several connections also exist to the pure-exploration and bandit literature, which studies recommending a single best action after collecting data. Though greedy rules are often suboptimal in sequential settings \cite{BGY2004,Wang2023}, Bayati et al. \cite{BHJK} and Jedor et al. \cite{JLP2021} show that greedy strategies can perform well in some multi-armed bandit models. Relatedly, Kannan et al.~\cite{kannan2018smoothed} give smoothed-analysis guarantees for greedy in linear contextual bandits, while our setting studies one-shot minimax regret under adversarial uncertainty over discrete rating distributions. Yun et al. \cite{YPASY2018} extend the bandit model to allow observing additional arms at some cost. In our case, observations have no cost but are also drawn from an unknown distribution.

\noindent As a useful benchmark from pure-exploration bandits, Bubeck, Munos, and Stoltz \cite{BMS2011} analyze one-shot regret for selecting the empirically best arm when all arms have the same number of observations. In our setting, each product comes with exactly $m$ i.i.d. ratings, and greedy selects the product with the highest observed value (with uniform tie-breaking). Our upper bound is of the same order as their benchmark. Moreover, in our setting we establish a matching lower bound, showing that no strategy can achieve a fundamentally faster worst-case decay of its regret.

\noindent Regret minimization is commonly studied in no-regret learning algorithms, as discussed by Roughgarden \cite{TN16} and Avramopoulos et al. \cite{ARS2008}. However, similar to the work by Blum and Mansour, it is measured in repeated settings with respect to the number of rounds. Lastly, using a method involving convex-combination updating, Flores-Szwagrzak \cite{Flores2022} shows how an agent can utilize observations from an unknown distribution, to consistently learn and improve their payoff.

\section{The Model}\label{sect:model}
Let $G=(\Sigma_1, \mathcal{S}, \pi)$ be a two-player zero-sum game against Nature, with player one called the Decision Maker (DM), and player two called Nature. In what follows, we define the notation and the terminology used in this paper. 

\noindent Let $n_d \text{ and }n_r\in \mathbb{N}$.
We use
$D = \{1,2,...,n_d\}$ to denote the set of \textit{products}  and $R=\{1,2,...,n_r\}$ to denote the set of discrete \textit{ratings}, where we assume that  $n_r \geq 2$. Nature decides the ratings for each product $d$ with a probability distribution unknown to the DM. The probability distributions for each product and rating are defined in a state (stochastic matrix) $S$. Similarly, an observation matrix $B$ reflects the samples from $S$. The decision-maker picks a strategy $\sigma_1 \in \Sigma_1$ that selects a product based on the observed ratings in $B$ which are sampled from an unknown stochastic-matrix $S \in \mathcal{S}$, with the aim to maximize its expected rating. We use $\pi$ as the payoff function that determines the payoff in the game.

\subsection{States} \label{section:model_states}
Since each product $d$ has a probability distribution for its ratings, the set of products $D$ naturally induces a column-stochastic matrix $S \in [0,1]^{n_r \times n_d}$ called a \textit{state}:
\[
S = 
\begin{bmatrix}
s_{1,1} & s_{1,2} &\cdots & s_{1, n_d}\\
\vdots & \vdots & \ddots & \vdots \\
s_{n_r, 1} & s_{n_r, 2} & \cdots &  s_{n_r, n_d}
\end{bmatrix}. 
\]
Each element $s_{r,d}$ of a state gives the probability that a product $d \in D$ receives a rating $r \in R$, hence, for all $d \in D$, $\sum_{r \in R} s_{r,d}=1$. Let $\mathcal{S}=\{S \in [0,1]^{n_r \times n_d} ~|~ \text{where for all } d \in D, \sum_{r \in R}s_{r,d}=1\}$ be the set of all possible states. \newline The following is an example of a state with two products as the matrix columns and two ratings as its rows:
\[
\tilde{S}_1=
\begin{bmatrix} 
0.7 & 0.4\\
0.3 & 0.6
\end{bmatrix}.
\]
In this state, there is a 0.7 probability that product 1 receives a rating of 1 and a 0.3 probability that it receives a rating of 2. In addition, product 2 has a 0.4 probability of receiving a rating of 1 and a 0.6 probability of receiving a rating of 2.

\subsection{Observations}\label{sec:observations}
\noindent For each product, the DM observes how often each rating occurs, but not the underlying state chosen by Nature. Sections~\ref{sect:model}--\ref{sect:results} focus on the equal-observations case in which every product has the same number of observations $m$. Section~\ref{sec:het_obs_res} then relaxes this assumption and allows product-specific observation counts.
\noindent Let $B \in \mathbb{N}_0^{n_r \times n_d}$ denote an \textit{observation matrix}, where $b_{r,d}$ counts how often product $d$ receives rating $r$:
\[
B = 
\begin{bmatrix}
b_{1,1} & b_{1,2} &\cdots & b_{1, n_d}\\
\vdots & \vdots & \ddots & \vdots \\
b_{n_r, 1} & b_{n_r, 2} & \cdots &  b_{n_r, n_d}
\end{bmatrix}. 
\]
Conditional on the state $S$, the individual rating draws are independent across products and observations. We consider the set $\mathcal{B}_m$ of possible observations, defined by
\[
\mathcal{B}_m =
\{ B \in \mathbb{N}_0^{n_r \times n_d} \; | \text{ where for all } d\in D, \sum_{r\in R} b_{r,d} = m   \},
\]
where $m \in \mathbb{N}$ denotes the total number of observations across all ratings for any product. For example when $m=1$, then there are four possible observation matrices $\mathcal{B}_1=\{\tilde{B}_1, \tilde{B}_2, \tilde{B}_3, \tilde{B}_4\}$:
\begin{align*}
\label{eq:B}
\tilde{B}_1 = \begin{bmatrix} 
1 & 1\\
0 & 0
\end{bmatrix}\text{, }
\tilde{B}_2 = \begin{bmatrix} 
1 & 0\\
0 & 1
\end{bmatrix} \text{, }
\tilde{B}_3 = \begin{bmatrix} 
0 & 1\\
1 & 0
\end{bmatrix} \text{and }
\tilde{B}_4 = \begin{bmatrix} 
0 & 0\\
1 & 1
\end{bmatrix}.
\end{align*}
\newline
Denote by $b_{\cdot,d}=(b_{r,d})_{r\in R}$ the observation vector corresponding to product $d$. For later use, let
\[
\mathbf m=(m_d)_{d\in D}
\]
denote the observation count vector, where
\[ m_d=\sum_{r\in R} b_{r,d} \qquad\text{for all } d\in D. \]
When $m_d=m$ for all products $d$, we simply write $m$.
For a given product $d$ and state $S$, its rating $r$ is drawn from a multinomial distribution characterized by a number of trials $m$ and a vector of probabilities $p=(p_{r_1}, p_{r_2},...,p_{n_r})$, where $p_i = s_{r_i, d}$. Let $Z$ denote a random variable which follows a multinomial distribution with the parameters $m$ and $p$. Observe that with $Z$, the probability of an observation vector per product $d$, $b_{\cdot, d}$ occurring given a state $S$ is:
\begin{align*}
\Pr{b_{\cdot,d} | S}=P(Z_1=b_{1, d}, Z_2=b_{2,d}, ...,Z_{n_r} = b_{n_r, d}) = \frac{m!}{\prod_{r}b_{r,d}!}\prod_{r}S_{r,d}^{b_{r,d}}\in [0,1].
\end{align*} \newline
By the independence of these distributions, the probability of observing matrix $B$ given $S$ is

\begin{align*}
\Pr{B|S} =  \prod_{d \in D}\Pr{b_{\cdot,d} | S} =\frac{(m!)^{n_d}}{\prod_{r,d}b_{r,d}!}\prod_{r,d} S_{r,d}^{b_{r,d}} \in [0,1].
\end{align*} The first term consists of the multinomial coefficient, and the second term is the probability that a given product and rating combination is observed $b_{r,d}$ times.

\subsubsection*{Illustrative example}\label{sec:illustrative_ex}

\noindent Consider the following observation matrix for 2 products and 2 ratings:
\[ 
\tilde{B} = \begin{bmatrix} 
1 & 0\\
2 & 3
\end{bmatrix}.
\]
This matrix shows that, for product 1, the DM observed rating 1 once and rating 2 twice. For product 2, the DM observed rating 2 three times.

\noindent To illustrate our framework intuitively, consider a situation where the DM makes three `draws' (observations) from two urns (products), each containing balls marked 1 or 2 (ratings). These draws follow the probability distributions in a given state. For the example of $\tilde{S}_1$, we would obtain the number of observations given in matrix $\tilde{B}$ when the DM draws, with replacement, one ball marked 1 with a probability of 0.7 and two balls marked 2, each with a probability of 0.3, from urn 1. From urn 2, she draws three balls marked 2, each with a probability of 0.6.

\noindent For product 1 in our illustrative example,
$
\Pr{\tilde{B}_{\cdot,1} | \tilde{S}_1} = \binom{3}{1} \cdot  0.7 \cdot (0.3)^2 = 3 \cdot 0.063 = 0.189.
$
This is because the probability of `drawing' a rating of one is 0.7 and 2 ratings of two is $(0.3)^2$. We multiply this by the Binomial coefficient $\binom{3}{1}$ because there are three different combinations in which one can draw 1 rating of one and 2 ratings of two. Similarly
$
\Pr{\tilde{B}_{\cdot,2} | \tilde{S}_{1}} = \binom{3}{0}  \cdot 0.216 = 0.216.
$
Therefore, the probability of observing $\tilde{B}$ in state $\tilde{S}_1$ is
$
\Pr{\tilde{B}|\tilde{S}_{1}} = \prod_{d \in \{1,2\}}\Pr{\tilde{B}_{\cdot,d} | \tilde{S}_{1}} = 0.040824.
$\newline 

Table~\ref{tab:prob_obs} presents the probability of each observation matrix for $m=1$ occurring for $\tilde{S}_1$.

\begin{table}[H]
\centering
\begin{tabular}{l|l|l|l|l}
 &    $\tilde{B}_1$ & $\tilde{B}_2$ & $\tilde{B}_3$ & $\tilde{B}_4$                       \\ \hline
$\Pr{B|\tilde{S}_1}$      & 0.28  & 0.42 & 0.12& 0.18  \\ 
\end{tabular}
\caption{Probability of each observation $B \in \mathcal{B}_1$ occurring with $\tilde{S}_1$ }
\label{tab:prob_obs}
\end{table}

\subsection{Strategies}
A \textit{strategy} for the DM is a function $\sigma_1:\mathcal{B}_m \rightarrow \mathcal{D}$ where $\mathcal{D}=\{p:D\rightarrow [0,1] ~|~ \sum_{d \in D}p(d)=1\}$ which assigns a probability distribution over the set of products $D$ for every observation matrix $B \in \mathcal{B}_m$. For a given observation matrix $B$, let $\Sigma_1$ be the set of all probability distributions over $D$:
\[\Sigma_1=\left\{\sigma_1(B) \in [0,1]^{n_d} \mid \sum_{d \in D}\sigma_1(B)(d)=1\right\}.\] Nature's strategy is to pick any state $S \in \mathcal{S}$, such that it maximizes the DM's regret, which we define below.

\subsection{Payoff}
 
The expected value, $V_S:D \rightarrow \mathbb{R}$, of a product given state $S$ is defined as follows
\[
V_S(d) = \sum_{r \in R} r\cdot s_{r,d}.
\]
For the $\tilde{S}_1$ state, the values of the products are the following, $V_{\tilde{S}_1}(1)=1.3$ and $V_{\tilde{S}_1}(2)=1.6$; where we used the first and second column of $\tilde{S}_1$ to compute $V_{\tilde{S}_1}(1)$ and $V_{\tilde{S}_1}(2)$ respectively.

\noindent Let $\pi : \Sigma_1 \times \mathcal{S}\rightarrow \mathbb{R}$ denote the DM's \textit{expected payoff function}. For a given strategy $\sigma_1$ and a state $S$, the payoff for the DM is:
\[\pi(\sigma_1,S) = \sum_{B \in \mathcal{B}_m}\Pr{B|S}\sum_{d \in D}\sigma_1(B)(d)V_S(d).\]
\subsection{Regret}
We now introduce the notation for \textit{regret}. For a given state of Nature $S$, the (expected) regret for the DM not selecting the highest value product in that state is
\[
\bar{\gamma}(\sigma_1, S) = \max_{d \in D} V_S(d) - \pi(\sigma_1, S).
\] Since the regret is measured only for the DM, it is non-negative.

\begin{Definition}
For a strategy $\sigma_1$, we define its worst-case regret, $\gamma(\sigma_1)$, as follows:
\[
\gamma(\sigma_1) = \max_{S \in \mathcal{S}} \left[ \max_{d \in D} V_S(d) - \pi(\sigma_1, S) \right].
\]
\end{Definition}

\noindent In words, the regret of strategy $\sigma_1$ is the difference in expected payoff between $\sigma_1$ and the highest valued product at some state $S \in \mathcal{S}$ such that this difference is maximized.

\section{Strategies for the Decision Maker}
\label{section:greedy_strategy}
In this section we define strategies for the DM that select products based on the available observations.

\noindent The \textit{uniform strategy}, denoted as $\sigma_u$, selects each product $d \in D$ with equal probability, $\sigma_u(B) \sim \text{Uniform}(D)$, regardless of the observation matrix $B$. \newline
\noindent The \textit{greedy strategy} selects the product(s) with the highest observed average rating. Given observation matrix $B$, let $ V_B : D \to [1, n_r]$  denote the function computing a product's average observed rating:

\[
V_B(d) = \frac{1}{m} \sum_{r \in R} r \cdot b_{r,d}.
\]

\noindent The greedy strategy then selects uniformly from the set of maximizers:

\[
\sigma_g(B) \sim \text{Uniform} \left( \arg\max_{d \in D} V_B(d) \right).
\]

\noindent In words, if there is a unique $d^*$ with the highest average rating, then the greedy strategy chooses $d^*$ with probability 1. If there are multiple maximizers, then the greedy strategy is the uniform distribution over the set of maximizers in $\arg \max_{d\in D} V_B(d)$. 
Table~\ref{tab:greedy_ex} presents a simple example of a greedy strategy choice of products.
\begin{table}[H]
\centering
\begin{tabular}{l|l|l}
$\sigma_g$ & $1$& $2$ \\ \hline
$\tilde{B}_1$ & 0.5 & 0.5  \\ 
$\tilde{B}_2$ & 0 & 1  \\ 
$\tilde{B}_3$ & 1 & 0  \\ 
$\tilde{B}_4$ & 0.5 & 0.5  \\ 
\end{tabular}
\quad \quad \quad

\caption{The greedy strategy $\sigma_g$ for observations in $\mathcal{B}_1$}
\label{tab:greedy_ex}
\end{table}

\noindent Given $\tilde{B}_1$  or $\tilde{B}_4$ defined in Section~\ref{sec:observations}, both products have identical observed ratings, so the greedy strategy selects uniformly between them. For $\tilde{B}_2$, product 2 has a higher rating and is chosen with probability 1; similarly, for $\tilde{B}_3$, product 1 is selected with probability 1.
\newline
For $\tilde{B}_1$ and state $\tilde{S}_1$, the greedy strategy plays both products uniformly. \newline Its expected payoff is $\Pr{\tilde{B}_1|\tilde{S}_1}\left(\sigma_g(\tilde{B}_1)(1)V_{\tilde{S}_1}(1) +\sigma_g(\tilde{B}_1)(2)V_{\tilde{S}_1}(2)\right)=0.28\left(\frac{1.3}{2} + \frac{1.6}{2}\right)=0.406$.
Averaging over all $B \in \mathcal{B}_1$, we compute the expected payoff of greedy: $\pi(\sigma_g, \tilde{S}_1) = 1.495$. The regret is the value gap between the best product and this expected payoff: $V_{\tilde{S}_1}(2)-\pi(\sigma_g, \tilde{S}_1)=1.6-1.495=0.105$.

\subsection{Upper Confidence Bound Algorithm}
The UCB algorithm balances exploration and exploitation by selecting the product with the highest upper confidence bound on its value \cite{ACF2002}. For a generic observation vector $\mathbf m=(m_d)_{d\in D}$, fix a confidence term function $\beta:\mathbb N\to\mathbb R_+$ that is decreasing. Given an observation matrix $B$, define the UCB score of product $d$ by
\[
I_d(B)=V_B(d)+\beta(m_d).
\]
The UCB strategy then is
\[
\sigma_{\mathrm{UCB}}(B)
\sim
\operatorname{Uniform}
\left(
\arg\max_{d\in D} I_d(B)
\right).
\]
In the equal-observations case, $m_d=m$ for every product $d$, so the confidence term $\beta(m_d)$ is constant across products. Therefore UCB selects the product with the highest observed value and is identical to the greedy strategy. Therefore, we do not analyze it separately in Section~\ref{sect:results}. When the number of observations per product is different, this equivalence no longer holds; Section~\ref{sect:UCB_het} returns to this point with a short counterexample.

\subsection{Thompson Sampling}

Thompson Sampling is a Bayesian algorithm that balances exploration and exploitation commonly used in multi-armed bandits \cite{RRKO2017}. In our one-shot setting, we use it only as an empirical benchmark after the observation matrix has already been realized. Fix a prior parameter $\alpha^0\in(0,\infty)^{n_r}$. Given the observed count vector $b_{\cdot,d}$ for product $d$, define $\alpha(d)=\alpha^0+b_{\cdot,d}$.
For each product $d\in D$, Thompson Sampling draws once independently from the Dirichlet distribution $Y_d\sim \operatorname{Dir}(\alpha(d))$, and then selects uniformly from the products with the highest sampled value:\footnote{Beta distribution if there are only two ratings.}
\[
\sigma_{\mathrm{TS}}(B)
\sim
\operatorname{Uniform}
\left(
\arg\max_{d\in D}
\sum_{r\in R} rY_{d,r}
\right).
\]

For example, with two ratings and the observation matrix
\[
\widetilde B_5=
\begin{bmatrix} 
7 & 5\\
2 & 4
\end{bmatrix},
\]
using the uniform prior $\alpha^0=(1,1)$ gives posterior draws
$Y_1\sim\operatorname{Beta}(3,8)$ and
$Y_2\sim\operatorname{Beta}(5,6)$ over the probability of the higher rating. The rule then selects the product with the larger sampled value.

We do not use sequential-bandit regret guarantees for Thompson Sampling in the theoretical analysis below. Those guarantees concern observations collection and cumulative regret over time, whereas our model is a one-shot decision problem after an exogenously observed batch of ratings. Thompson Sampling is therefore included as a comparison rule in the empirical section rather than as a theoretically analyzed benchmark.

\section{Results: Equal Number Of Observations} \label{sect:results}
This section states our main regret bounds for the greedy strategy in our model. There are $n_d$ products, ratings take values in $\{1,\dots,n_r\}$, and we observe a fixed number $m$ i.i.d.\ ratings per product.
We begin with the main minimax statement, then present the two propositions that imply it; full proofs are given later in Sections \ref{sec:proof-lb} and \ref{sec:proof-ub}. We additionally handle the special case for $n_d=2$ in Section \ref{section:regret_m1}: in this special case we can go beyond rate
statements and prove a proposition showing that greedy is exactly minimax-optimal (it attains the minimax regret, not just the minimax rate). In all results that follow, all logarithms are natural. \newline
Recall that for a strategy $\sigma_1 \in \Sigma_1$ and state $S \in \mathcal{S}$, the (expected) regret is the difference of the payoffs of the best product (in hindsight) and the product selected by a particular strategy $\sigma_1$
\[
\bar\gamma(\sigma_1,S) = \max_{d\in D}V_S(d) - \pi(\sigma_1,S).
\]
We measure worst-case performance by maximizing over states:
\[
\gamma(\sigma_1) = \max_{S\in\mathcal S}\bar\gamma(\sigma_1,S).
\]
\begin{theorem}[Greedy is minimax-rate optimal]\label{cor:minimax_optimal_greedy}
Assuming $m \ge \frac{3}{64}\log(n_d)$, there exist universal constants\label{cor:minimax_optimal_greedy} $0<c_0\le C_0<\infty$ such that both of the following hold \footnote{The assumption of $m \ge \frac{3}{64}\log(n_d)$, is largely immaterial, since it is trivially fulfilled for $n_d\approx10^9$ when $m=1$.}
 \begin{itemize}
\item No strategy can have regret better than \[ c_0(n_r-1)\sqrt{\frac{\log(n_d)}{m}}  \] 
\item The greedy strategy has a regret of at most 
\[C_0(n_r-1)\sqrt{\frac{\log(n_d)}{m}}.\]
\end{itemize}

In particular, $\gamma(\sigma_g)=\Theta\!\left((n_r-1)\sqrt{\frac{\log(n_d)}{m}}\right)$ and greedy is minimax-rate optimal in this regime.
\end{theorem}
\noindent We call a strategy \textit{minimax-rate optimal} if its worst-case regret matches, up to constant factors, the smallest possible worst-case regret over all strategies $\sigma_1 \in \Sigma_1$, with $n_d$ and $n_r$ fixed.

\noindent Theorem \ref{cor:minimax_optimal_greedy} follows by combining the regret lower bound (proved for any strategy, not just greedy) with a matching upper bound for the greedy strategy.

\begin{proposition}[Regret lower bound]\label{lem:minimax-lower}
For $n_d \ge 3$ and $m \ge \frac{3}{64}\log(n_d)$, there exists a universal constant $c>0$ such that,
\[
\inf_{\sigma_1 \in \Sigma_1} \gamma(\sigma_1) \ge c (n_r-1) \sqrt{\frac{\log(n_d)}{m}}.
\]
\end{proposition}
\noindent\textbf{Proof idea:} We construct a hard family of states in which all products appear nearly identical: exactly one product is only slightly more likely to receive the top rating, while all others are slightly less likely. When the advantage is on the order of $\sqrt{\frac{\log(n_d)}{m}}$, the observation matrix does not contain enough information to reliably identify the best product, so any strategy misidentifies it with nontrivial probability. The proof in Section \ref{sec:proof-lb} bounds pairwise KL divergences and then applies Fano's inequality.

\begin{proposition}[Finite-sample worst-case regret of greedy] \label{thm:greedy-upper}
For any number of products $n_d \ge 1$, ratings $n_r \ge 2$, and observations $m \ge 1$, the greedy strategy satisfies
\[ \gamma(\sigma_g) = \max_{S \in \mathcal S} \bar\gamma(\sigma_g, S) \;\le\; (n_r - 1)\sqrt{\frac{2\log(n_d)}{m}}.\]
In particular,
\[ \gamma(\sigma_g) = O\!\left((n_r - 1)\sqrt{\frac{\log(n_d)}{m}}\right) \quad\text{as } m \to \infty.\]
\end{proposition}
\noindent\textbf{Proof idea.} Greedy can incur regret only if the best product is underestimated or the suboptimal product(s) is overestimated. Because ratings are bounded, such upward and downward fluctuations of empirical averages are known, and the worst-case regret is therefore governed by how large these fluctuations can be when we compare many products using only $m$ observations per product. The proof in Section \ref{sec:proof-ub} uses Hoeffding's inequality, to bound the products' empirical averages.

\subsection{Greedy's (upper bound) regret and asymptotic behavior} \label{sec:proof-ub}
In this subsection we provide an upper bound on the worst-case regret of the greedy strategy. The bound scales with the number of products and observations as $(n_r - 1)\sqrt{\frac{\log(n_d)}{m}}$ and holds over all states of Nature. In particular, it does not require any assumption on the gap between the best and the second-best product in the state. We start by recording a concentration inequality for the observed (empirical) product values.

\begin{lemma}[Concentration of observed values] \label{lem:concentration-VB}
Fix a state $S \in \mathcal S$ and a product $d \in D$. For each $k \in \{1,\dots,m\}$ let $X_{k,d} \in R$ denote the $k$-th rating observed for product $d$, so that
\[ V_B(d) = \frac{1}{m}\sum_{k=1}^m X_{k,d} \quad\text{and}\quad V_S(d) = \mathbb{E}[X_{k,d}\mid S]. \]
Then, for every $t \ge 0$,
\[ \Pr{|V_B(d) - V_S(d)| \; \ge t \;|\;S}  \le 2 \exp\left(-\frac{2m t^2}{(n_r - 1)^2}\right).\]
\end{lemma}

\begin{proof}
Each random variable $X_{k,d}$ takes values in the set $\{1,\dots,n_r\}$, so the difference between its maximum and minimum possible value is $n_r - 1$. The $m$ observations for a fixed product $d$ are independent by construction. The claim therefore follows directly from Hoeffding's inequality for averages of independent bounded random variables applied to the sequence $X_{1,d},\dots,X_{m,d}$.
\end{proof}
Lemma~\ref{lem:concentration-VB} implies that, for a fixed state $S$ and product $d$, the estimation error $V_B(d) - V_S(d)$ behaves like a sub-Gaussian random variable with parameter $\frac{(n_r - 1)^2}{ 4m}$. Since the columns of the observation matrix $B$ are independent given  $S$, these errors are independent across products.
Therefore, by the standard bound on the expected maximum of independent sub-Gaussian random variables,
\begin{align}
\mathbb{E} \left[\max_{d\in D}\left(V_B(d)-V_S(d)\right) \mid S\right]
&\le (n_r-1)\sqrt{\frac{\log(n_d)}{2m}}, \label{eq:max-over}\\
\mathbb{E} \left[\max_{d\in D}\left(V_S(d)-V_B(d)\right) \mid S\right]
&\le (n_r-1)\sqrt{\frac{\log(n_d)}{2m}}. \label{eq:max-under}
\end{align}

We can now (upper) bound the regret of the greedy strategy, proving Proposition \ref{thm:greedy-upper}.

\begin{proof}
Fix a state $S \in \mathcal S$ and let $d^* \in \arg\max_{d \in D} V_S(d)$ be the best valued product in state $S$. Given an observation matrix $B \in \mathcal{B}_m$, let $\hat{d} = \sigma_g(B)$ be the product selected by the greedy strategy. For this fixed $B$ and $S$, the instantaneous regret is $V_S(d^*) - V_S(\hat{d})$.

Since greedy selects the maximum value observed product, its regret can be decomposed in two parts, the amount by which the true best product's value $V_S(d^*)$ is underestimated in the given observation matrix: $V_S(d^*)-V_B(d^*)$ and the amount by which the greedy selected product's value $V_B(\hat{d})$ is overestimated in the given observation matrix: $V_B(\hat{d})-V_S(\hat{d})$:
\[V_S(d^*)-V_S(\hat{d}) \leq (V_S(d^*)-V_B(d^*)) + \left(V_B(\hat{d})-V_S(\hat{d}) \right)\]
Each of the two terms on the right-hand side can be bounded by a maximum over the products:
\[ V_S(d^*) - V_B(d^*) \leq \max_{d \in D} (V_S(d) - V_B(d)) \; \text{ and } \; V_B(\hat{d})-V_S(\hat{d}) \leq \max_{d \in D} (V_B(d) - V_S(d)). \]
Therefore, for every $B$,
\[ V_S(d^*) - V_S(\hat{d}) \leq \max_{d \in D} (V_S(d) - V_B(d)) + \max_{d \in D} (V_B(d) - V_S(d)).\]
Taking expectations with respect to the randomness in $B$ and $\hat d$ gives
\[\bar\gamma(\sigma_g,S) = \mathbb{E}\left[V_S(d^*) - V_S(\hat{d})\right] \leq \mathbb{E}\left[\max_{d \in D} (V_S(d) - V_B(d))\right] + \mathbb{E}\left[\max_{d \in D} (V_B(d) - V_S(d))\right].\]
Applying \eqref{eq:max-over} and \eqref{eq:max-under} to each term yields
\[\bar\gamma(\sigma_g,S) \leq (n_r - 1)\sqrt{\frac{\log(n_d)}{2m}} + (n_r - 1)\sqrt{\frac{\log(n_d)}{2m}} = (n_r - 1)\sqrt{\frac{2\log(n_d)}{m}}. \]
Since the bound does not depend on $S$, taking the maximum over all states completes the proof.
\end{proof}
Proposition~\ref{thm:greedy-upper} shows that the greedy strategy is worst-case robust: its regret vanishes at rate $(n_r - 1)\sqrt{\frac{\log(n_d)}{m}}$ as the number of observations per product grows, for any state of Nature.

\subsection{Regret lower bound}\label{sec:proof-lb}
A clean way to see why regret cannot vanish faster than order $(n_r-1)\sqrt{\frac{\log(n_d)}{m}}$ is to build a family of states that are genuinely hard to distinguish from the observations.
Intuitively, we make all products look almost identical: in each state, exactly one product is slightly more likely to receive a higher rating, while every other product is slightly less likely. If the advantage is small enough, then with only $m$ observations per product, the observation matrix $B$ does not contain enough information to reliably identify which product is truly best.
Any strategy must therefore make mistakes with nontrivial probability, and each mistake costs a fixed value gap, hence nontrivial regret.

\subsection*{A set of hard states.}
Fix $\varepsilon \in \left(0,\frac{1}{2}\right)$. For each $d^*\in D$, define a state $S_{d^*}\in \mathcal{S}$ supported only on ratings $\{1,n_r\}$ by specifying,
\[
s_{n_r,d}=\begin{cases}
\frac{1}{2}+\varepsilon&\text{if }d=d^*,\\
\frac{1}{2}-\varepsilon&\text{if }d\neq d^*,
\end{cases}
\quad
s_{1,d}=1-s_{n_r,d},
\quad
s_{r,d}=0\text{ for all }r\in R\setminus\{1,n_r\}.
\]
Recall $V_S(d)=\sum_{r\in R}r\,s_{r,d}$. In each state $S_{d^*}$, let $d^*$ denote the unique maximizer of $V_{S_{d^*}}(\cdot)$, i.e., $\arg\max_{d\in D}V_{S_{d^*}}(d)=\{d^*\}$. Then, for every $d \in D \setminus\{d^*\}$, the product value gap equals:
\[
V_{S_{d^*}}(d^*)-V_{S_{d^*}}(d)=(n_r-1)\left(\left(\frac{1}{2}+\varepsilon\right)-\left(\frac{1}{2}-\varepsilon\right)\right)=2(n_r-1)\varepsilon.
\]
Fix $d^* \in D$ and any strategy $\sigma_1$. Given $B\in\mathcal{B}_m$, let $\hat d$ be the random product selected according to $\sigma_1(B)$, i.e., $\Pr{\hat d=d\mid B}=\sigma_1(B)(d)\text{ for all }d\in D$.
Then, $\bar\gamma(\sigma_1,S_{d^*})=2(n_r-1)\varepsilon\cdot\Pr{\hat d\neq d^*\mid S_{d^*}}$, or:
\[
\bar\gamma(\sigma_1,S_{d^*})=2(n_r-1)\varepsilon\sum_{B\in\mathcal{B}_m}\sum_{d\in D\setminus\{d^*\}}\Pr{B\mid S_{d^*}}\sigma_1(B)(d).
\]
That is the regret for any strategy of the decision-maker $\sigma_1$ given $S_{d^*}$, the product of the products' value gap and the probability of the strategy misidentifying the best product $d^*$.

\subsection*{KL divergence between observation distributions.}
For $d^*,d\in D$, we write
\[
\mathrm{KL}(S_{d^*}\|S_d)=\sum_{B\in\mathcal{B}_m}\Pr{B|S_{d^*}}\log\left(\frac{\Pr{B|S_{d^*}}}{\Pr{B|S_d}}\right).
\]
We will work with $\varepsilon\in \left (0,\frac{1}{4}\right]$, which is stricter than needed for the construction (which only requires $\varepsilon<\frac{1}{2}$).

\begin{lemma}[Pairwise KL divergence between hard states]\label{lem:kl_pairwise}
Fix $\varepsilon\in(0,\frac{1}{4}]$ and the hard family $\{S_{d^*}\}_{d^*\in D}$ defined above.
For any distinct $d^*,d\in D$,
\[
\mathrm{KL}(S_{d^*}\|S_{d})\le\frac{64}{3}m\varepsilon^2.
\]
\end{lemma}
\begin{proof}
Under $S_{d^*}$, each observed rating lies in $\{1,n_r\}$. For each product $d \in D$ and sample index $k\in\{1,\ldots,m\}$ define
\[
X_{d,k}=\mathbf{1}\{\text{the $k$-th rating of product $d$ equals }n_r\}.
\]
Then the collection $\{X_{d,k}\}_{d,k}$ is independent given the state, and
\[
X_{d,k}\sim\mathrm{Bern}\left(\frac{1}{2}+\varepsilon\right)\text{ if }d=d^*
\text{ and }
X_{d,k}\sim\mathrm{Bern}\left(\frac{1}{2}-\varepsilon\right)\text{ if }d\neq d^*.
\]
Fix distinct $d^*,d\in D$. The states $S_{d^*}$ and $S_d$ differ only in columns $d^*$ and $d$, so by additivity of KL for product measures,
\[
\mathrm{KL}(S_{d^*}\|S_d)=mA+mB,
\]
where
\[
A=\mathrm{KL}\left(\mathrm{Bern}\left(\frac{1}{2}+\varepsilon\right)\Big\| \; \mathrm{Bern}\left(\frac{1}{2}-\varepsilon\right)\right),
\quad
B=\mathrm{KL}\left(\mathrm{Bern}\left(\frac{1}{2}-\varepsilon\right)\Big\| \;\mathrm{Bern}\left(\frac{1}{2}+\varepsilon\right)\right).
\]
A direct simplification gives
\[
A=2\varepsilon\log \left(\frac{1+2\varepsilon}{1-2\varepsilon}\right).
\]
Let $u=2\varepsilon\in \left(0,\frac{1}{2} \right]$. Using $\log \left(\frac{1+u}{1-u}\right) \le \frac{2u}{1-u^2}$, we obtain
\[
A\le 2\varepsilon\cdot\frac{2(2\varepsilon)}{1-(2\varepsilon)^2} =\frac{8\varepsilon^2}{1-4\varepsilon^2} \le\frac{8\varepsilon^2}{\frac{3}{4}} =\frac{32}{3}\varepsilon^2,
\]
where the last inequality uses $\varepsilon\le\frac{1}{4}$. By symmetry the same bound holds for $B$. Therefore,
\[
\mathrm{KL}(S_{d^*}\|S_d) \le m\cdot\frac{32}{3}\varepsilon^2+m\cdot\frac{32}{3}\varepsilon^2=\frac{64}{3}m\varepsilon^2.
\]
\end{proof}
In Lemma \ref{lem:kl_pairwise} we use the Bernoulli distribution for describing the two distinct product draws from the state. For multiple product draws, the probability of an observation matrix comes from a Binomial distribution. In Appendix \ref{section:kl_equals}, Lemma \ref{lem:binom_bern_kl} we show that the KL divergence of the Binomial distribution equals $m$ times the Bernoulli KL divergence.

\subsection*{From KL to misidentification via Fano}
We now translate the KL bound from Lemma \ref{lem:kl_pairwise} into a lower bound on the probability that any strategy misidentifies the best product in the hard family $\{S_{d^*}\}_{d^*\in D}$.

\begin{lemma}[Fano inequality for the hard family]\label{lem:fano_hard_family}
Fix any strategy $\sigma_1$ and let $\hat{d}$ be the (mixed) product selected after observing $B$.
Then
\[
\frac{1}{n_d}\sum_{d^*\in D}\Pr{\hat d \neq d^* \mid S_{d^*}}\ge 1-\frac{\frac{1}{{n_d}^2}\sum_{d^*\in D}\sum_{d\in D}\mathrm{KL}(S_{d^*}\|S_d)+\log(2)}{\log(n_d)}.
\]
\end{lemma}

\begin{proof}
Let Nature draw a state $S_{d^*}$ uniformly, that is $d^* \sim \text{Uniform}(D)$. We draw $B$ according to $\Pr{B \mid S_{d^*}}$, then
\[
\Pr{\hat d\neq d^*}=\frac{1}{n_d}\sum_{d^* \in D}\Pr{\hat d\neq d^* \mid S_{d^*}}.
\]
Fano's inequality gives
\[
\Pr{\hat d\neq d^*}\ge 1-\frac{I(d^*;B)+\log(2)}{\log(n_d)}.
\]
It remains to upper bound $I(d^*;B)$. Define the uniform mixture distribution on $\mathcal{B}_m$ by
\[
\bar P[B] =\frac{1}{n_d}\sum_{d^*\in D} \Pr{B \mid S_{d^*}}.
\]
By the standard identity for mutual information under a uniform prior,
\[
I(d^*;B) =\sum_{d^*\in D}\frac{1}{n_d}\sum_{B'\in \mathcal{B}_m} \Pr{B'\mid S_{d^*}}\, \log \left(\frac{\Pr{B'\mid S_{d^*}}}{\bar P[B']}\right) =\frac{1}{n_d}\sum_{d^*\in D}\mathrm{KL} \left(\Pr{\cdot\mid S_{d^*}}\;\|\;\bar{P}\right).
\]
Fix $d^* \in D$. Using the definition of $\bar{P}$ and the convexity of $-\log(\cdot)$, for each $B\in \mathcal{B}_m$ we have
\begin{align*}
\log\frac{\Pr{B\mid S_{d^*}}}{\bar P[B]} &= -\log \left(\frac{\bar P[B]}{\Pr{B\mid S_{d^*}}}\right)
= -\log \left(\frac{1}{n_d}\sum_{d\in D}\frac{\Pr{B\mid S_d}}{\Pr{B\mid S_{d^*}}}\right)\\
&\le \frac{1}{n_d}\sum_{d\in D} -\log \left(\frac{\Pr{B\mid S_d}}{\Pr{B\mid S_{d^*}}}\right)
= \frac{1}{n_d}\sum_{d\in D}\log \left( \frac{\Pr{B\mid S_{d^*}}}{\Pr{B\mid S_d}}\right).
\end{align*}
Multiplying by $\Pr{B\mid S_{d^*}}$ and summing over $B\in \mathcal{B}_m$ yields
\begin{align*}
\mathrm{KL}\left(\Pr{\cdot\mid S_{d^*}}\;\|\;\bar P\right) =\sum_{B\in \mathcal{B}_m} \Pr{B\mid S_{d^*}}\log \left(\frac{\Pr{B\mid S_{d^*}}}{\bar P[B]}\right) \le \\ \le \frac{1}{n_d}\sum_{d\in D}\sum_{B\in \mathcal{B}_m} \Pr{B\mid S_{d^*}}\log \left(\frac{\Pr{B\mid S_{d^*}}}{\Pr{B\mid S_d}}\right) = \frac{1}{n_d}\sum_{d\in D} \mathrm{KL}(S_{d^*}\|S_d).
\end{align*}
Averaging this bound over $d\in D$ gives
\[
I(d^*;B) = \frac{1}{n_d}\sum_{d^*\in D}\mathrm{KL} \left(\Pr{\cdot\mid S_{d^*}}\;\|\;\bar P\right) \le \frac{1}{{n_d}^2}\sum_{d^*\in D}\sum_{d\in D}\mathrm{KL}(S_{d^*}\|S_d).
\]
Finally, combining this with Fano's inequality,
\[
\Pr{\hat d\neq d^*} \ge 1-\frac{I(d^*;B)+\log(2)}{\log(n_d)},
\]
and using $\Pr{\hat d\neq d^*}=\frac{1}{n_d}\sum_{d^*\in D}\Pr{\hat d\neq d^*\mid S_{d^*}}$, we obtain
\[
\frac{1}{n_d}\sum_{d^*\in D}\Pr{\hat d\neq d^*\mid S_{d^*}}
\ge 1-\frac{\frac{1}{n_d^2}\sum_{d^* \in D}\sum_{d\in D}\mathrm{KL}(S_{d^*}\|S_d)+\log(2)}{\log(n_d)},
\]
which matches the lemma statement.
\end{proof}

\subsection*{Decision-maker's strategy lower bound regret.}
\noindent In this subsection, we complete the proof for Proposition \ref{lem:minimax-lower}, showing the rate of minimum regret any strategy $\sigma_1$ attains.

\begin{proof}
Since $\gamma(\sigma_1)=\max_{S \in \mathcal{S}}\bar{\gamma}(\sigma_1,S)$, we have
\[\gamma(\sigma_1)\geq\frac{1}{n_d}\sum_{d^* \in D}\bar{\gamma}(\sigma_1, S_{d^*}).\]
Fix any strategy $\sigma_1$ and consider the hard family $\{S_{d^*}\}_{d^*\in D}$ defined above with 
\[\varepsilon = \sqrt{\frac{3\log(n_d)}{1024m}}.\] In state $S_{d^*}$, product $d^*$ is the unique maximizer of $V_{S_{d^*}}(\cdot)$ and, for every $d\neq d^*$,
\[
V_{S_{d^*}}(d^*)-V_{S_{d^*}}(d) = 2(n_r-1)\varepsilon.
\]
Recalling that $\hat d$ is the (random) product selected according to $\sigma_1(B)$ after observing $B$, it follows that the expected regret in state $S_{d^*}$ is exactly
\[
\bar\gamma(\sigma_1,S_{d^*})=2(n_r-1)\varepsilon \cdot \Pr{\hat d\neq d^* \mid S_{d^*}}.
\]
Therefore,
\begin{align*}
\gamma(\sigma_1) = \max_{S\in S} \bar{\gamma}(\sigma_1,S) &\ge \max_{d^* \in D}\bar{\gamma}(\sigma_1,S_{d^*}) \ge \frac{1}{n_d}\sum_{d^* \in D}\bar\gamma(\sigma_1,S_{d^*}) \\
&= 2(n_r-1)\varepsilon\cdot \frac{1}{n_d}\sum_{d^*\in D}\Pr{\hat d\neq d^* \mid S_{d^*}}.
\end{align*}
Assume $n_d\ge 3$. By Lemma \ref{lem:fano_hard_family} and Lemma \ref{lem:kl_pairwise},
\[
\frac{1}{n_d}\sum_{d^*\in D}\Pr{\hat d\neq d^* \mid S_{d^*}} \ge 1-\frac{\frac{1}{n_
d^2}\sum_{d^*\in D}\sum_{d\in D} \mathrm{KL}(S_{d^*}\|S_d)+\log(2)}{\log(n_d)} \ge 1-\frac{\frac{64}{3}m\varepsilon^2+\log(2)}{\log(n_d)}.
\]
Choose $\varepsilon = \sqrt{\frac{3\log(n_d)}{1024\,m}}$, which lies in $(0,\frac{1}{4}]$ whenever $m\ge \frac{3}{64}\log(n_d)$. Then $\frac{64}{3}m\varepsilon^2=\frac{1}{16}\log(n_d)$ and hence
\[
\frac{1}{n_d}\sum_{d^*\in D}\Pr{\hat d\neq d^* \mid S_{d^*}} \ge \frac{15}{16}-\frac{\log(2)}{\log(n_d)} \ge \frac{1}{4},
\]
where the last inequality holds for all $n_d\ge 3$.
Plugging this back yields
\[
\gamma(\sigma_1)\ge 2(n_r-1)\varepsilon\cdot \frac{1}{4} = \frac{(n_r-1)\varepsilon}{2} = \frac{\sqrt{3}}{64}(n_r-1)\sqrt{\frac{\log(n_d)}{m}}.
\]
Since $\sigma_1$ was arbitrary, we conclude that there exists a universal constant $c>0$ such that
\[
\inf_{\sigma_1}\gamma(\sigma_1)\ \ge\ c\,(n_r-1)\sqrt{\frac{\log(n_d)}{m}} \qquad\text{for all } n_d\ge 3 \text{ and } m\ge \tfrac{3}{64}\log(n_d).
\]
\end{proof}

\subsection*{Regret lower bound for the two product case}
The above Fano-based argument from Lemma \ref{lem:fano_hard_family} requires $n_d\ge 3$ (for $n_d=2$ the bound becomes vacuous, i.e. it collapses to a non-positive regret on the RHS), so we treat the binary case $n_d=2$ separately below.

\begin{lemma}[Regret lower bound for $n_d=2$] \label{lemma:rlb_nd2}
Assume $n_d=2$. There exists a universal constant $c>0$ such that
\[
\inf_{\sigma_1\in\Sigma_1}\gamma(\sigma_1)\ge c (n_r-1)\sqrt{\frac{\log(2)}{m}}.
\]
\end{lemma}

\begin{proof}
Fix $\varepsilon = \sqrt{\frac{3}{512m}}\le \frac{1}{4}$ and consider the two states $S_1,S_2\in\mathcal S$ supported only on ratings $\{1,n_r\}$, given by
\[
S_1=\begin{bmatrix}
\frac12-\varepsilon & \frac12+\varepsilon\\
0 & 0\\[-2pt]
\vdots & \vdots\\[-2pt]
0 & 0\\
\frac12+\varepsilon & \frac12-\varepsilon
\end{bmatrix},\qquad
S_2=\begin{bmatrix}
\frac12+\varepsilon & \frac12-\varepsilon\\
0 & 0\\[-2pt]
\vdots & \vdots\\[-2pt]
0 & 0\\
\frac12-\varepsilon & \frac12+\varepsilon
\end{bmatrix}.
\]
In state $S_i$, the unique maximizer is $d^*(S_1)=1$ and $d^*(S_2)=2$, and the value gap equals $2(n_r-1)\varepsilon$.
Let $\hat d = \sigma_1(B)$ be the (random) product selected by strategy $\sigma_1$ after observing $B$.
Thus for $i\in\{1,2\}$,
\[
\bar\gamma(\sigma_1,S_i)\;=\;2(n_r-1)\varepsilon \mathbb P_{S_i}\big[\hat d\neq d^*(S_i)\big].
\]
Let $P_i(\cdot)= \Pr{\cdot\mid S_i}$ be the law of $B$ under $S_i$, and define the average misidentification probability
\[
P_e(\sigma_1) = \frac12 \mathbb P_{S_1}[\hat d\neq 1]+\frac12 \mathbb P_{S_2}[\hat d\neq 2].
\]
Since $\gamma(\sigma_1)=\max_{S\in\mathcal S}\bar\gamma(\sigma_1,S)\ge \frac12\bar\gamma(\sigma_1,S_1)+\frac12\bar\gamma(\sigma_1,S_2)$,
we get
\[
\gamma(\sigma_1) \ge 2(n_r-1)\varepsilon P_e(\sigma_1).
\]
By the two-point (Le Cam) bound,
\[
\inf_{\sigma_1\in\Sigma_1} P_e(\sigma_1)\;\ge\;\frac12\Big(1-\mathrm{TV}(P_1,P_2)\Big)  \ge \frac12\Big(1-\sqrt{\tfrac12 \mathrm{KL}(P_1\|P_2)}\Big),
\] where $\mathrm{TV}$ is total variation and the last term is Pinsker's inequality.
Moreover, $S_1$ and $S_2$ differ in exactly the same way as two distinct hard states in Lemma \ref{lem:kl_pairwise}, hence
\[
\mathrm{KL}(P_1\|P_2)=\mathrm{KL}(S_1\|S_2) \le \frac{64}{3} m\varepsilon^2
 = \frac{64}{3} m \cdot \frac{3}{512m} = \frac{1}{8}.
\]
Therefore $\inf_{\sigma_1} P_e(\sigma_1)\ge \frac12\left(1-\sqrt{\frac{1}{16}}\right)=\frac{3}{8}$, and so
\[
\inf_{\sigma_1\in\Sigma_1}\gamma(\sigma_1) \ge 2(n_r-1) \varepsilon \cdot \frac{3}{8}
= \frac{3}{4}(n_r-1)\sqrt{\frac{3}{512m}}
= c(n_r-1)\frac{1}{\sqrt m},
\]
for $c=\frac{3}{4}\sqrt{\frac{3}{512}}>0$.
\end{proof}

\subsection{Greedy's regret rate is minimax optimal}\label{sec:minimax_optimal}
In this section we prove Theorem \ref{cor:minimax_optimal_greedy}, showing that the greedy strategy attains the minimax-optimal regret rate (up to universal constants).

\begin{proof}
The upper bound $\gamma(\sigma_g)\le C_0 (n_r-1)\sqrt{\frac{\log(n_d)}{m}}$ follows from Proposition \ref{thm:greedy-upper}.
The lower bound $\inf_{\sigma_1\in\Sigma_1}\gamma(\sigma_1)\ge c_0 (n_r-1)\sqrt{\frac{\log(n_d)}{m}}$ follows from Proposition \ref{lem:minimax-lower} and Lemma \ref{lemma:rlb_nd2} under the stated conditions on $m$. Combining the two completes the proof.
\end{proof}

\subsection{Optimality of the greedy strategy: the two product case} \label{section:regret_m1}
Since we measure the maximum regret across any possible state, without loss of generality we can pick a single generic state, assuming that it maximizes the regret. For $p_1, p_2 \in [0,1]$, define
\[S'=
\begin{bmatrix} 
p_1 & p_2\\
1 - p_1 & 1 - p_2
\end{bmatrix}.
\]
The following proposition shows that for two products, that also have two ratings and with one observation in the game, the worst-case regret for the greedy strategy is $\frac{1}{8}$.
\begin{proposition}\label{proposition:regretm1}
For $n_d=2$, $n_r=2$ and $m=1$, the worst-case regret of the greedy strategy is $\frac{1}{8}$.
\end{proposition}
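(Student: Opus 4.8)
The plan is to reduce the two-dimensional maximization over the state $S'$ to a one-variable calculus problem by exploiting the structure of the two-product, two-rating, single-observation setting. First I would record the product values: since with two ratings a column of $S'$ is pinned down by its top entry, I write $v_1 = V_{S'}(1) = 2 - p_1$ and $v_2 = V_{S'}(2) = 2 - p_2$, so the value gap is $v_2 - v_1 = p_1 - p_2$. Every $2\times 2$ state has this form, so $S'$ is fully general, and by relabeling the products without loss of generality I take product $2$ to be the (weakly) better one, i.e. $p_2 \le p_1$, and set $\delta := p_1 - p_2 = v_2 - v_1 \ge 0$.

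The key simplification is the observation that, with a unique better product, greedy incurs the full gap exactly when it selects the inferior product $1$ and zero regret otherwise, so $\bar\gamma(\sigma_g, S') = (v_2 - v_1)\cdot \Pr{\sigma_g \text{ picks } 1}$. To compute that probability I would enumerate the four matrices $\tilde B_1,\dots,\tilde B_4$ of $\mathcal B_1$: by independence their probabilities under $S'$ are $p_1 p_2$, $p_1(1-p_2)$, $(1-p_1)p_2$, and $(1-p_1)(1-p_2)$. Reading off Table~\ref{tab:greedy_ex}, greedy picks product $1$ with probability $\tfrac12$ on the tied matrices $\tilde B_1,\tilde B_4$, with probability $1$ on $\tilde B_3$, and with probability $0$ on $\tilde B_2$. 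Summing the weighted contributions, the $p_1 p_2$ cross terms cancel and the expression collapses to $\Pr{\sigma_g \text{ picks } 1} = \tfrac12(1 + p_2 - p_1) = \tfrac12(1 - \delta)$.

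Combining the two facts yields the clean univariate objective $\bar\gamma(\sigma_g, S') = \tfrac12\,\delta(1-\delta)$ on $\delta \in [0,1]$. I would then maximize this downward-opening parabola: its derivative $\tfrac12(1 - 2\delta)$ vanishes at $\delta = \tfrac12$, giving worst-case value $\tfrac12\cdot\tfrac12\cdot\tfrac12 = \tfrac18$. Since $\delta = p_1 - p_2$ is an admissible difference of probabilities (realized e.g. by $p_1 = \tfrac34,\ p_2 = \tfrac14$), the bound is attained, establishing $\gamma(\sigma_g) = \tfrac18$.

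The only real obstacle is the algebraic simplification of the picking probability: the whole argument hinges on the cancellation that turns the bilinear expression in $p_1,p_2$ into the one-variable form $\tfrac12(1-\delta)$, after which the optimization is routine. I would also confirm the reduction is sound — that neither a state with $p_2 > p_1$ nor a different placement of the tie-broken matrices produces larger regret — but this follows immediately from the symmetry of relabeling the two products.
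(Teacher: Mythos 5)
Your proof is correct, and it reaches the same answer by a cleaner route than the paper. Both arguments use the same ingredients: the generic state $S'$, the enumeration of the four observation matrices in $\mathcal{B}_1$ with probabilities $p_1p_2$, $p_1(1-p_2)$, $(1-p_1)p_2$, $(1-p_1)(1-p_2)$, and the greedy tie-breaking from Table~\ref{tab:greedy_ex}. The difference is in how the optimization is finished. The paper computes the full payoff $\pi(\sigma_g,S')=2-\tfrac{p_1+p_2}{2}+\tfrac{(p_1-p_2)^2}{2}$ and then maximizes the regret as a two-variable function of $(p_1,p_2)$; this runs into a degenerate Hessian (determinant zero, second-derivative test inconclusive), which the paper patches by arguing concavity along each axis and evaluating on the critical line $p_2=p_1+\tfrac12$. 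You instead write the regret directly as $(\text{value gap})\times\Pr{\text{greedy picks the worse product}}$, observe that the bilinear terms cancel so that the picking probability is $\tfrac12(1-\delta)$ with $\delta=|p_1-p_2|$, and reduce everything to maximizing the univariate parabola $\tfrac12\delta(1-\delta)$, which cleanly peaks at $\delta=\tfrac12$ with value $\tfrac18$ and is attained at an admissible state. This one-variable reduction sidesteps the degenerate second-order analysis entirely and makes transparent \emph{why} the worst case is a gap of exactly $\tfrac12$; the paper's version, on the other hand, exhibits the explicit payoff formula, which it reuses conceptually elsewhere. Your closing sanity checks (symmetry under relabeling, verifying the tie-broken matrices) cover the only places the reduction could have gone wrong.
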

\begin{proof} 
Given $S'$ the value of each product is: $V_{S'}(1)=2-p_1$ and $V_{S'}(2)=2-p_2$.
In Table~\ref{tab:greedy_p_prob} below, we compute the probability of each observation $B \in \mathcal{B}_1$ occurring, given the state $S'$.

\begin{table}[H]
\centering
\begin{tabular}{l|l|l|l|l}
 & $\tilde{B}_1$ & $\tilde{B}_2$ & $\tilde{B}_3$ & $\tilde{B}_4$ \\ \hline
  $\Pr{B|S'}$  & $p_1p_2$ &  $p_1(1-p_2)$ & $(1-p_1)p_2$ &   $(1-p_1)(1-p_2)$\\ 
\end{tabular}
\caption{Probability of each observation $B \in \mathcal{B}_1$ with $S'$}
\label{tab:greedy_p_prob}
\end{table}

\noindent Without loss of generality assume $p_1 < p_2$, the first product has higher value and it is always best to pick product one with value $2-p_1$. Thus, the regret of choosing the second product is, the value gap between the two products:
$\Delta=2-p_1 - (2-p_2) = p_2 - p_1 \in [0,1]$.
\newline
The payoff of the greedy strategy in this setting is:
$\pi(\sigma_g, S') = 2-\frac{p_1+p_2}{2}+\frac{(p_1-p_2)^2}{2}.$

\noindent And the regret $\bar{\gamma}$:
\[\bar{\gamma}(\sigma_g, S') = (2-p_1)-\pi(\sigma_g, S') = -\frac{p_1}{2} + \frac{p_2}{2} - \frac{(p_1-p_2)^2}{2}=\frac{\Delta}{2}-\frac{\Delta^2}{2}.\]
This is a concave quadratic in $\Delta$ on $[0,1]$, maximized at $\Delta=\frac{1}{2}$, yielding
\[\gamma(\sigma_g)=\max_{S'\in\mathcal S}\bar\gamma(\sigma_g,S')=\frac{1}{8}.\]
For two products, two ratings, after one observation and any number of states, the regret that the greedy strategy can get is at most $\frac{1}{8}$.

\noindent In comparison, the worst-case regret for the uniform strategy for $n_d=n_r=2$ and any $m \in \mathbb{N}$ is $\frac{1}{2}$, and that occurs any time $p_1=1$ and $p_2=0$ or $p_1=0$ and $p_2=1$.
\end{proof}

\noindent Next, we show that the greedy strategy is optimal and $\frac{1}{8}$ is the lowest worst-case regret any strategy can achieve for any state of Nature.

\begin{proposition}\label{proposition:greedy_optimal}
For $n_d=n_r=2$ and $m=1$, no strategy of the DM can guarantee worst-case regret strictly below $\frac18$ uniformly over all states of Nature.
\end{proposition}

\begin{proof} We prove this by contradiction, for the sake of the contradiction consider there is a fixed strategy for the DM $\sigma_1 \in \Sigma_1$ with $\gamma(\sigma_1)<\frac{1}{8}$. Consider the two states $S_1, S_2 \in \mathcal{S}$:
\[S_1 = \begin{bmatrix} 0.5 & 0 \\
0.5 & 1 \\ \end{bmatrix} \text{ and } 
S_2 = \begin{bmatrix} 0 & 0.5 \\
1 & 0.5 \\ \end{bmatrix}.\]

\noindent Consider the regret incurred for the observation matrix $\tilde{B}_4$, recall that this is the observation matrix where both products have exactly 1 two-star rating. The probability of $\tilde{B}_4$ occurring is, $\Pr{\tilde{B}_4|S_1}=\Pr{\tilde{B}_4|S_2}=\frac{1}{2}$. The value difference between the two products is $V_{S_1}(2)-V_{S_1}(1)=V_{S_2}(1)-V_{S_2}(2)=\frac{1}{2}$. For simplicity, let $p=\sigma_1(\tilde{B}_4)(1)$ and $(1-p) =\sigma_1(\tilde{B}_4)(2)$, i.e., the probabilities with which a strategy $\sigma_1$ selects product one and two respectively. Note that, $\sigma_1$ has an arbitrary but fixed value of $p$. We show that, regardless of the chosen value of $p$, either $S_1$ or $S_2$ will incur a worst-case regret larger than $\frac18$.

\noindent With $\tilde{B}_4$, for $S_1$ and $S_2$ the regret is:
\[\bar{\gamma}(\sigma_1, S_1) \geq p\Pr{\tilde{B}_4|S_1}\left(V_{S_1}(2)-V_{S_1}(1)\right)=\frac{p}{4} \text{ and}\]
\[\bar{\gamma}(\sigma_1, S_2) \geq (1-p)\Pr{\tilde{B}_4|S_2}\left(V_{S_2}(1)-V_{S_2}(2)\right)=\frac{(1-p)}{4} \text{ respectively.}\]
These equalities show the contribution from $\tilde{B_4}$; other observation matrices yield additional non-negative terms, so the total regret is at least the regret from $\tilde{B_4}$.
\noindent For the fixed strategy $\sigma_1$, by assumption the regret is smaller than $\frac18$ and thus:
\[\text{for } S_1 \text{, } \bar{\gamma}(\sigma_1, S_1)<\frac{1}{8} \text{ or } \frac{p}{4}<\frac{1}{8} \rightarrow p < \frac{1}{2},\]
\[\text{and for } S_2 \text{, }\bar{\gamma}(\sigma_1, S_2)<\frac{1}{8} \text{ or }\frac{(1-p)}{4}<\frac{1}{8}\rightarrow p > \frac{1}{2}.\]
This contradiction yields the claim. 
\end{proof}

\begin{corollary}[Minimax optimality of greedy for $n_d=n_r=2$ and $m=1$]\label{cor:greed_optimal}
In the setting $n_d=n_r=2$ and $m=1$, the greedy strategy $\sigma_g$ is minimax-optimal:
\[
\gamma(\sigma_g)=\inf_{\sigma_1\in\Sigma_1}\gamma(\sigma_1)=\frac18.
\]
\end{corollary}
\begin{proof}
Proposition \ref{proposition:regretm1} shows $\gamma(\sigma_g)=\frac{1}{8}$, and Proposition \ref{proposition:greedy_optimal} shows that every strategy satisfies $\gamma(\sigma_1) \ge \frac{1}{8}$. Combining the two yields the claim.
\end{proof}

\subsection{Numerical evaluation of greedy's regret for \texorpdfstring{$2\leq m \leq 20$}{2≤m≤20} observations per product }\label{section:regret_m12}

Here we increase the number of observations up to $m=20$, but still consider two products and two ratings. For each value of $m$, both products
receive exactly $m$ observations. In order to calculate what is the maximum regret with the greedy strategy as the observations increase, we express $\bar\gamma$ with respect to $p_1$ and $p_2$ from the $S'$ state in Section \ref{section:regret_m1}, and numerically solve for its maximum.

For $n_d=n_r=2$ and for any state $S \in \mathcal{S}$, Figure~\ref{fig:numresult} presents a numerical bound for the worst-case regret of the greedy strategy as $m$ increases up to 20. Notice that the regret is quickly decreasing, after 10 observations the maximum regret would have decreased at least threefold from $\frac{1}{8}$ down to $\approx\frac{1}{24}$.
This monotonic decrease should be read as an equal-observation result. In Section~\ref{sec:het_obs_res}, we show that for the different number of observations case greedy can behave differently when observations are added for only one product.

\begin{figure}[H]
\caption{Maximum regret of the greedy strategy choosing between two products as the number of observations increase up to 20}\label{fig:numresult}
\centering
\includegraphics[scale=0.51]{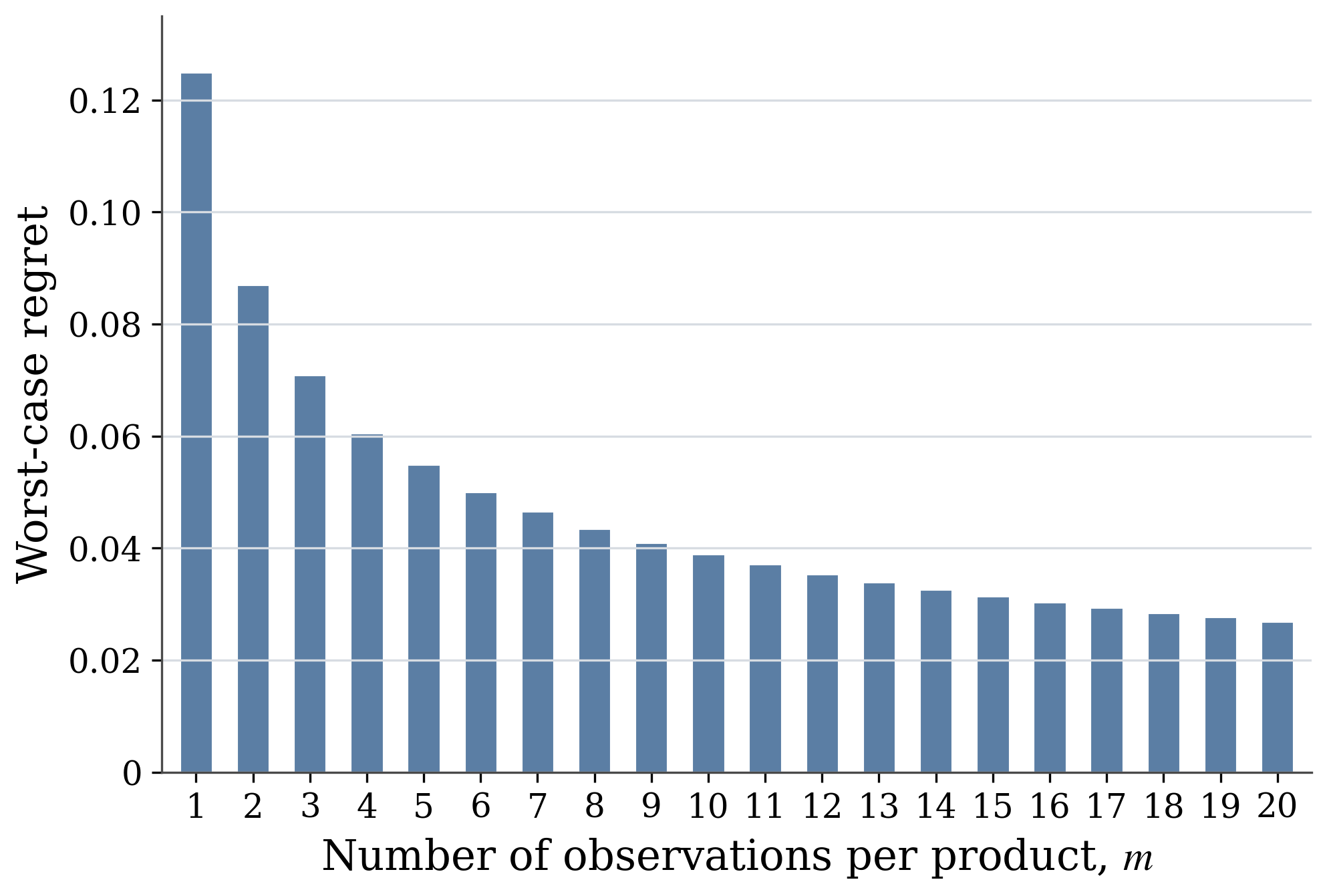}
\end{figure}

\section{Results: Heterogeneous Observations}\label{sec:het_obs_res}
\noindent In the previous sections, we assumed that every product has the same number of observations. This is a balanced-information benchmark: the greedy strategy compares empirical values that are estimated from the same amount of data. We now relax this assumption and allow the number of observations to differ across products. Let
\[
\mathbf m=(m_d)_{d\in D}
\]
denote the observation vector, where $m_d\geq 1$ is the number of observations available for product $d$, and define
\[
m_{\min}=\min_{d\in D}m_d .
\]
Throughout this section, payoff and regret are computed using this observation vector. The greedy strategy is unchanged: it still selects a product with the highest observed average rating. \newline
For a fixed observation vector $\mathbf m$, the set of possible observation
matrices is
\[ \mathcal B_{\mathbf m} = \left\{ B\in \mathbb N_0^{n_r\times n_d} :
\sum_{r\in R} b_{r,d}=m_d \text{ for all } d\in D \right\}.
\]
For $B\in\mathcal B_{\mathbf m}$, we keep the notation $V_B(d)$ for the
observed value of product $d$, now computed using the number of observations
available for that product:
\[
V_B(d)=\frac{1}{m_d}\sum_{r\in R} r b_{r,d}.
\]

\subsection{Greedy's upper bound regret with heterogeneous observations}
In this section we show that the finite-sample upper bound for greedy is robust to unequal numbers of observations. The bound is conservative: it depends only on the least observed product.

\begin{proposition}\label{prop:het_up_bound}
The greedy strategy satisfies
\[
\gamma(\sigma_g)
\leq
(n_r-1)\sqrt{\frac{2\log(n_d)}{m_{\min}}}.
\]
\end{proposition}

\begin{proof}
Fix a state $S\in\mathcal S$. For each product $d$, the proof of
Lemma~\ref{lem:concentration-VB} applies with $m_d$ observations instead of $m$. Hence, for every $t\geq 0$,
\[
\mathbb P\left[ |V_B(d)-V_S(d)|\geq t \mid S \right] \leq
2\exp\left( -\frac{2m_dt^2}{(n_r-1)^2} \right).
\]
Since $m_d\geq m_{\min}$, we also have
\[
\mathbb P\left[ |V_B(d)-V_S(d)|\geq t \mid S \right] \leq 
2\exp\left( - \frac{2m_{\min}t^2}{(n_r-1)^2} \right).
\]
Thus the empirical errors satisfy the same sub-Gaussian bound used in the proof of Proposition~\ref{thm:greedy-upper}, with $m_{\min}$ in place of
$m$.
The regret decomposition in the proof of Proposition~\ref{thm:greedy-upper} is unchanged, since it only uses the fact that greedy selects a product with the highest observed value. Applying the same expected-maximum bound with $m_{\min}$ gives, for every state $S$,
\[
\bar\gamma(\sigma_g,S) \leq (n_r-1)\sqrt{\frac{2\log(n_d)}{m_{\min}}}.
\]
Taking the maximum over $S\in\mathcal S$ completes the proof.
\end{proof}

\paragraph{Remark on the lower-bound argument with different number of observations.}
The lower-bound proof from Proposition~\ref{lem:minimax-lower} does not extend by simply replacing $m$ with $m_{\min}$. The upper bound above only uses concentration of each product's empirical value, and therefore the least number of observations is sufficient. The lower-bound proof constructs states that are hard to distinguish from the observations. In the equal number of observations case, the KL divergence between two hard states from Lemma~\ref{lem:kl_pairwise} is proportional to $m$, because every product has $m$ observations. With heterogeneous observations, if two hard states differ in products $i$ and $j$, the corresponding KL term is instead proportional to $m_i+m_j$. Thus the lower-bound scale depends on the particular observation vector, or on the group of products over which the hard-state construction is applied. We therefore use the $m_{\min}$ result only as a conservative upper bound, and next study the exact heterogeneous behavior of greedy in the two-product case.

\subsection{One-sided Observations Can Increase Greedy's Regret}\label{sec:greedy_het_one}
\noindent The previous result gives a regret upper bound for greedy under heterogeneous observations. We now show that the exact worst-case regret can behave in a qualitatively different way. At first sight, the phenomenon is paradoxical: adding observations for one product can increase the worst-case regret of greedy. The reason is that greedy compares observed averages without accounting for the different number of observations (precision) of those averages. A product observed once can attain the maximum empirical value after a single high rating. By contrast, a product observed many times is likely to reveal even rare low ratings, so its empirical average may fall below the maximum possible value. Thus, in worst-case states, additional observations can break a tie that previously protected greedy and can make the less observed product strictly preferred. \newline
Throughout this subsection, we use the two-product, two-rating state $S'$ from Section~\ref{section:regret_m1}, where $p_d$ denotes the probability that product $d$ receives rating $1$. Hence $V_{S'}(d)=2-p_d$.
We compare the equal-observations vector $\mathbf m=(1,1)$ with the one-sided path $\mathbf m=(1,k)$, where product $1$ is observed once and product $2$ is observed $k$ times.

\begin{proposition}\label{prop:ones_regret}
Suppose $n_d=n_r=2$. Then
\[
\left.
\gamma(\sigma_g)
\right|_{\mathbf m=(1,1)}
=
\frac18,
\qquad
\lim_{k\to\infty}
\left.
\gamma(\sigma_g)
\right|_{\mathbf m=(1,k)}
=
\frac14 .
\]
The same limit holds along the path $\mathbf m=(k,1)$.
\end{proposition}

\begin{proof}
The equality for $\mathbf m=(1,1)$ follows from Proposition~\ref{proposition:regretm1}. We prove the one-sided limit. Under $\mathbf m=(1,k)$, product $1$ is observed once, while product $2$ is observed $k$ times. First suppose $p_1\leq p_2$. Then product $1$ is optimal, and the value gap is $V_{S'}(1)-V_{S'}(2)=p_2-p_1$.
Regret occurs when greedy selects product $2$. If product $1$'s single observation is rating $1$, which occurs with probability $p_1$, then product $1$ has the minimum empirical value. Greedy selects product $2$ unless all $k$ observations of product $2$ are also rating $1$, in which case greedy randomizes. This gives the term $p_1\left(1-\frac{p_2^k}{2}\right)$. If product $1$'s single observation is rating $2$, which occurs with probability $1-p_1$, then product $1$ has the maximum empirical value. Greedy selects product $2$ only if all $k$ observations of product $2$ are also rating $2$, in which case greedy again randomizes. This gives the term $\frac{1-p_1}{2}(1-p_2)^k$. Therefore, in the case $p_1\leq p_2$,
\[
\left.
\bar\gamma(\sigma_g,S')
\right|_{\mathbf m=(1,k)}
=
(p_2-p_1)
\left[
p_1\left(1-\frac{p_2^k}{2}\right)
+
\frac{1-p_1}{2}(1-p_2)^k
\right].
\]
Since $\frac{p_1 p_2^k}{2} \geq 0 $, we have $p_1 - \frac{p_1 p_2^k}{2} \leq p_1$, thus we can bound the regret expression by
\[
\bar\gamma(\sigma_g,S')|_{\mathbf m=(1,k)} \leq  p_1(p_2-p_1)
+
\frac12(p_2-p_1)(1-p_1)(1-p_2)^k .
\]
For the first term, $\max_{0\leq p_1\leq p_2\leq 1}p_1(p_2-p_1)=\frac14$, thus:
\[\left. \bar\gamma(\sigma_g,S') \right|_{\mathbf m=(1,k)} \leq \frac{1}{4} + \frac12(p_2-p_1)(1-p_1)(1-p_2)^k .\]
For the second term, since $p_2-p_1\leq p_2$ and $1-p_1\leq 1$,
\[\left. \bar\gamma(\sigma_g,S') \right|_{\mathbf m=(1,k)} \leq \frac{1}{4} + \frac12 p_2(1-p_2)^k .\]
The remaining term is now of the form  $f(x)=x(1-x)^k$ over $x\in[0,1]$. Then, $f'(x)=(1-x)^{k-1}(1-(k+1)x)$, so the maximum is attained at $x=\frac{1}{(k+1)}$, hence
\[
f\left(\frac{1}{k+1}\right)
=
\frac{1}{k+1}\left(\frac{k}{k+1}\right)^k
\leq
\frac{1}{k+1}.
\]
For the case of $p_1\leq p_2$,
\[\left. \bar\gamma(\sigma_g,S') \right|_{\mathbf m=(1,k)} \leq \frac{1}{4} + \frac{1}{2(k+1)} .\]
Now suppose $p_2\leq p_1$. Then product $2$ is optimal, and the value gap is $V_{S'}(2)-V_{S'}(1)=p_1-p_2$. By the same enumeration as above, regret occurs when greedy selects product $1$, and
\[\left. \bar\gamma(\sigma_g,S') \right|_{\mathbf m=(1,k)} = (p_1-p_2)
\left[
(1-p_1)\left(1-\frac{(1-p_2)^k}{2}\right)
+
\frac{p_1}{2}p_2^k
\right].
\]
Since $\frac{(1-p_1)(1-p_2)^k}{2}\geq 0$, we have
$(1-p_1)-\frac{(1-p_1)(1-p_2)^k}{2}\leq 1-p_1$, thus we can bound the regret expression by
\[
\left.
\bar\gamma(\sigma_g,S')
\right|_{\mathbf m=(1,k)}
\leq
(1-p_1)(p_1-p_2)
+
\frac12 p_1(p_1-p_2)p_2^k .
\]
For the first term, $\max_{0\leq p_2\leq p_1\leq 1}(1-p_1)(p_1-p_2)=\frac14$, thus:
\[
\left.
\bar\gamma(\sigma_g,S')
\right|_{\mathbf m=(1,k)}
\leq
\frac14
+
\frac12 p_1(p_1-p_2)p_2^k .
\]
For the second term, since $p_1\leq 1$ and $p_1-p_2\leq 1-p_2$,
\[
\left.
\bar\gamma(\sigma_g,S')
\right|_{\mathbf m=(1,k)}
\leq
\frac14
+
\frac12(1-p_2)p_2^k .
\]
The remaining term is now of the form $f(x)=(1-x)x^k$ over $x\in[0,1]$. Then $f'(x)=x^{k-1}(k-(k+1)x)$, so the maximum is attained at $x=\frac{k}{k+1}$, hence
\[
f\left(\frac{k}{k+1}\right)
=
\frac{1}{k+1}
\left(\frac{k}{k+1}\right)^k
\leq
\frac{1}{k+1}.
\]
For the case of $p_2\leq p_1$,
\[
\left.
\bar\gamma(\sigma_g,S')
\right|_{\mathbf m=(1,k)}
\leq
\frac14+\frac{1}{2(k+1)}.
\]
Since both of the cases have the same upper bound, taking the maximum over states gives
\[
\limsup_{k\to\infty}
\left.
\gamma(\sigma_g)
\right|_{\mathbf m=(1,k)}
\leq
\frac14 .
\]
It remains to show that this upper bound can be approached. For large $k$, set
\[
p_1=\frac12,
\qquad
p_2=\varepsilon_k,
\qquad
\varepsilon_k=\frac{1}{\sqrt{k}} .
\]
For a given number of observations for the second product - $k$, we define a state $S'_k$. Product $2$ is optimal, since it receives rating $1$ with probability $\varepsilon_k$, while product $1$ receives rating $1$ with probability $\frac{1}{2}$. The value gap is
\[ V_{S'_k}(2)-V_{S'_k}(1) = p_1-p_2 = \frac12-\varepsilon_k.\]
The choice $\varepsilon_k=\frac{1}{\sqrt{k}}$ serves two purposes. First, $\varepsilon_k\to 0$, so product $2$ becomes almost always high-rated. Second, $k\varepsilon_k\to\infty$, so even though low ratings for product $2$ are rare, at least one low rating appears among $k$ observations of product $2$ with probability tending to one. Indeed,
\[(1-\varepsilon_k)^k \leq e^{-k\varepsilon_k} = e^{-\sqrt{k}} \to 0.\]
Now consider the event that product $1$'s single observation is rating $2$, while product $2$'s $k$ observations are not all rating $2$. This event has probability
\[(1-p_1)\left[1-(1-p_2)^k\right] = \frac12\left[1-(1-\varepsilon_k)^k\right].\]
On this event, product $1$ has empirical value $2$, while product $2$ has empirical value strictly below $2$. Therefore greedy selects product $1$, the inferior product. Hence
\[ \left. \bar\gamma(\sigma_g,S'_k) \right|_{\mathbf m=(1,k)} \geq \left(\frac12-\varepsilon_k\right) \frac12 \left[1-(1-\varepsilon_k)^k\right]. \] 
Since $\gamma(\sigma_g)$ is the maximum regret over states,
\[
\liminf_{k\to\infty}
\left.
\gamma(\sigma_g)
\right|_{\mathbf m=(1,k)}
\geq
\frac14 .
\]
Together with the upper bound, this proves
\[
\lim_{k\to\infty}
\left.
\gamma(\sigma_g)
\right|_{\mathbf m=(1,k)}
=
\frac14 .
\]
The result for $\mathbf m=(k,1)$ follows by relabeling the products.
\end{proof}

\noindent Using the same numerical procedure as in Section~\ref{section:regret_m12}, we compute the worst-case regret of greedy along the one-sided heterogeneous observation path
$\mathbf m=(m_1,1)$ with $m_1 \in \{1,2,\cdots,30\}$. That is, for each fixed pair $(m_1,1)$, we solve the corresponding maximization problem over states and record the largest regret attained by the greedy strategy. This isolates the heterogeneous path in which the number of observations increases for only one product, while the other product remains observed once. Figure~\ref{fig:heterogeneous-greedy-regret} shows that this one-sided path behaves differently from the equal-observations path in Section~\ref{section:regret_m12}. In the equal-observations case, regret decreases as both products receive more observations. Along the one-sided path, however, regret can increase: adding observations for only one product may make greedy perform worse in the worst case.

\begin{figure}[H]
\caption{Worst-case regret of the greedy strategy in the two-product, two-rating case as a function of the heterogeneous observation vector
$\mathbf m=(m_1,1)$.}\label{fig:heterogeneous-greedy-regret}
\centering
\includegraphics[scale=0.37]{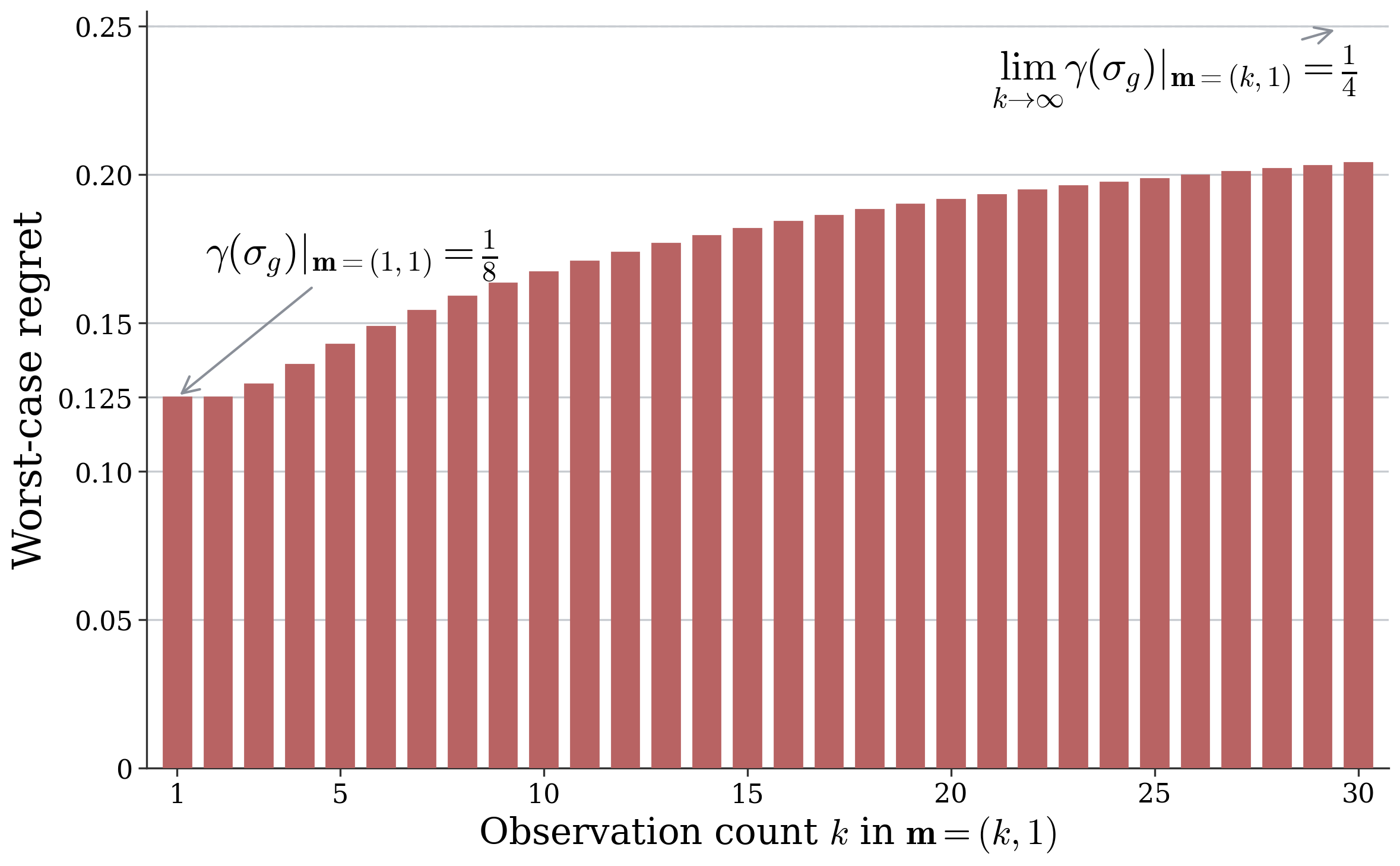}
\end{figure}
\noindent The proposition shows that the nonmonotonicity is not caused by information having negative value. It is a property of the greedy strategy. Greedy uses the observed averages directly, even when they are based on different numbers of observations. In the worst-case states above, the additional observations make the better product's rare low ratings visible, while the less observed product can still appear perfect after a single high rating.
This result does not contradict Proposition~\ref{prop:het_up_bound}. Along the one-sided observations path $\mathbf m=(k,1)$, the least observation count remains $m_{\min}=1$. Hence the conservative upper bound from Proposition~\ref{prop:het_up_bound} does not improve along this path. The nonmonotonicity concerns the exact worst-case regret of the greedy strategy when observations are added to only one product.

\subsection{Confidence Bound Algorithms In The One-Shot Model}\label{sect:UCB_het}
\noindent 
The previous subsection shows that greedy can behave nonmonotonically under heterogeneous observations. A natural response is to use a confidence-bound algorithm, since such algorithms explicitly account for unequal sample sizes. We show that there exist states of Nature, such that with heterogeneous observations, UCB has regret that is twice as large as the greedy strategy. For an empirical value of a product $d$ - $V_B(d)$, any UCB score of the form
\[
I_d(B)=V_B(d)+\beta(m_d)
\]
coincides with greedy whenever $m_d=m$ for all $d\in D$, because the confidence term $\beta(m_d)$ is constant across products. With unequal observations, however, a strictly decreasing confidence term $\beta(\cdot)$ gives less observed products a larger score. This is natural in sequential bandit problems, where selecting (exploring) a less observed product produces new information used in future rounds. In our model, the decision is one-shot, so the information from exploring a less-observed product does not get used. This distinction can increase worst-case regret. The following example isolates this effect and shows that UCB can incur strictly greater regret than the greedy strategy. Consider $n_d=n_r=2$ and $\mathbf m=(1,2)$. Thus $\beta(1)>\beta(2)$, and consider the state
\[ S^{\mathrm{UCB}} =
\begin{bmatrix}
\frac12 & 0\\
\frac12 & 1
\end{bmatrix}.
\]
Thus,
\[ V_{S^{\mathrm{UCB}}}(2)-V_{S^{\mathrm{UCB}}}(1) = 2-\frac32 = \frac12 .\]
Product $2$ is strictly better and it always receives rating $2$; under $\mathbf m=(1,2)$ its observed value is always $V_B(2)=2$. Consider the following observation matrix: \[B^{UCB}=\begin{bmatrix}
0 & 0\\
1 & 2
\end{bmatrix}.\]
Observe that, although product $2$ is strictly better, both of the products have the same observed value for this observation matrix - $V_{B^{UCB}}(1)=V_{B^{UCB}}(2)=2$. Product $1$ is observed once, and with probability $\frac{1}{2}$ this observation is also rating $2$, thus $\Pr{B^{UCB}|S^{UCB}}=\frac12$. Although the observed values are tied, the UCB scores are not:
\[ I_1(B) = 2+\beta(1) > 2+\beta(2) = I_2(B). \]
Hence UCB selects product $1$, which is the inferior product. Therefore
\[
\left. \bar\gamma(\sigma_{\mathrm{UCB}},S^{\mathrm{UCB}}) \right|_{\mathbf m=(1,2)} \geq \left( V_{S^{UCB}}(2)-V_{S^{UCB}}(1) \right)\Pr{B^{UCB}|S^{UCB}}\sigma_{UCB}(B^{UCB})(1) \geq \frac14 .
\]
The only other observation matrix with positive probability under $S^{UCB}$ is $\begin{bmatrix}
1 & 0\\
0 & 2
\end{bmatrix}$, here product two has the larger empirical value, so greedy selects product two and incurs no regret. UCB also selects product 2 when $\beta(1)-\beta(2)<1$; for a larger confidence-bonus difference, it may incur additional regret at this observation matrix. In either case, this event does not reduce the preceding lower bound.
Taking the maximum over states gives
\[ \left. \gamma(\sigma_{\mathrm{UCB}}) \right|_{\mathbf m=(1,2)} \geq \frac14 .
\]
At the same state, greedy can incur regret only at $B^{UCB}$. Since this observation matrix occurs with probability $\frac12$, and greedy selects the inferior product with probability $\frac12$ under uniform tie-breaking,
\[ \left. \bar\gamma(\sigma_g,S^{\mathrm{UCB}}) \right|_{\mathbf m=(1,2)} = \left( V_{S^{UCB}}(2)-V_{S^{UCB}}(1) \right)\Pr{B^{UCB}|S^{UCB}}\sigma_g(B^{UCB})(1)
=
\frac18.
\]
Moreover, the greedy numerical regret maximization in Section \ref{sec:greedy_het_one} for $\mathbf m=(1,2)$ over all possible states $S \in \mathcal{S}$ gives:
\[
\left.
\gamma(\sigma_g)
\right|_{\mathbf m=(1,2)}
=
\max_{S\in\mathcal S}
\left.
\bar{\gamma}(\sigma_g,S)
\right|_{\mathbf m=(1,2)}
=
\frac18.
\]
We conclude that for the specific observations vector $\mathbf m=(1,2)$, UCB incurs at least twice the regret of greedy: $\frac{1}{4}$ instead of $\frac{1}{8}$. The confidence term is harmful here because there is no future round in which the information from exploring the less-observed product can be used. \newline
This result is not specific to upper confidence bounds. An analogous counterexample applies to lower confidence bound algorithms (LCB), which score is $V_B(d)-\beta(m_d)$. By symmetrically reversing the rating rows of $S^{\mathrm{UCB}}$ and $B^{\mathrm{UCB}}$, the observed values remain tied, but that makes the more-observed product inferior, which LCB selects and thereby incurs regret at least $\frac14$.

\section{Empirical Results for the Greedy Strategy on Google Reviews Data for Restaurants}
\label{sec:Google}

We next give a finite-sample empirical illustration using Google reviews data for restaurants \cite{YHLZM2022}. The objective is to select a restaurant with high consumer satisfaction, independent of its cuisine. This experiment is an empirical counterpart to our theoretical results for the equal number of observations case, showing that the greedy strategy performs well even when the state of Nature is unknown. The dataset contains 1.5 million reviews for 64,000 restaurants, each with a single review per guest and ratings in $R = \{1,2,3,4,5\}$. We only consider restaurants with at least 10 observations. The average rating is 4.45 with a standard deviation of 1. Given the dataset's size, we assume Nature has fixed its state $S \in \mathcal{S}$, and that the reviews reflect this state. We treat each restaurant's average rating across all its reviews as its true value and benchmark the greedy and Thompson Sampling strategies against this ground truth. Thompson Sampling uses the symmetric Dirichlet prior $\alpha_0=\mathbf 1$. The experiment proceeds as follows:

\begin{enumerate}
\item Sample $n_d$ restaurants uniformly from the set of restaurants and calculate their value, by taking the average across all available observations for each restaurant.
\item Draw uniformly at random without replacement $m$ different observations (reviews) for each restaurant.
\item Run each strategy on the sampled observations and record which restaurants they pick.
\item Calculate the regret between the best restaurant from step 1 and the payoff from the restaurants that step 3 selects.
\end{enumerate}

\noindent For $n_d\in\{2,3,\ldots,10\}$, $m\in\{1,2,\ldots,10\}$, and $n_r=5$, we evaluate all three strategies over 1500 simulations for each pair $(n_d,m)$. Greedy attains lower mean regret than both the uniform strategy and the Thompson Sampling algorithm for every tested pair $(n_d,m)$. Its mean regret declines overall as the number of observations increases for each number of products, whereas the regret of uniform selection as expected is insensitive to the number of observations. These results are depicted in Figure~\ref{fig:simuresult}.

\begin{figure}[H]
\caption{Regret of the proposed strategies when tested on Google reviews data for restaurants. The regret of the greedy strategy is always below all other strategies.}\label{fig:simuresult}
\centering
\includegraphics[scale=0.41]{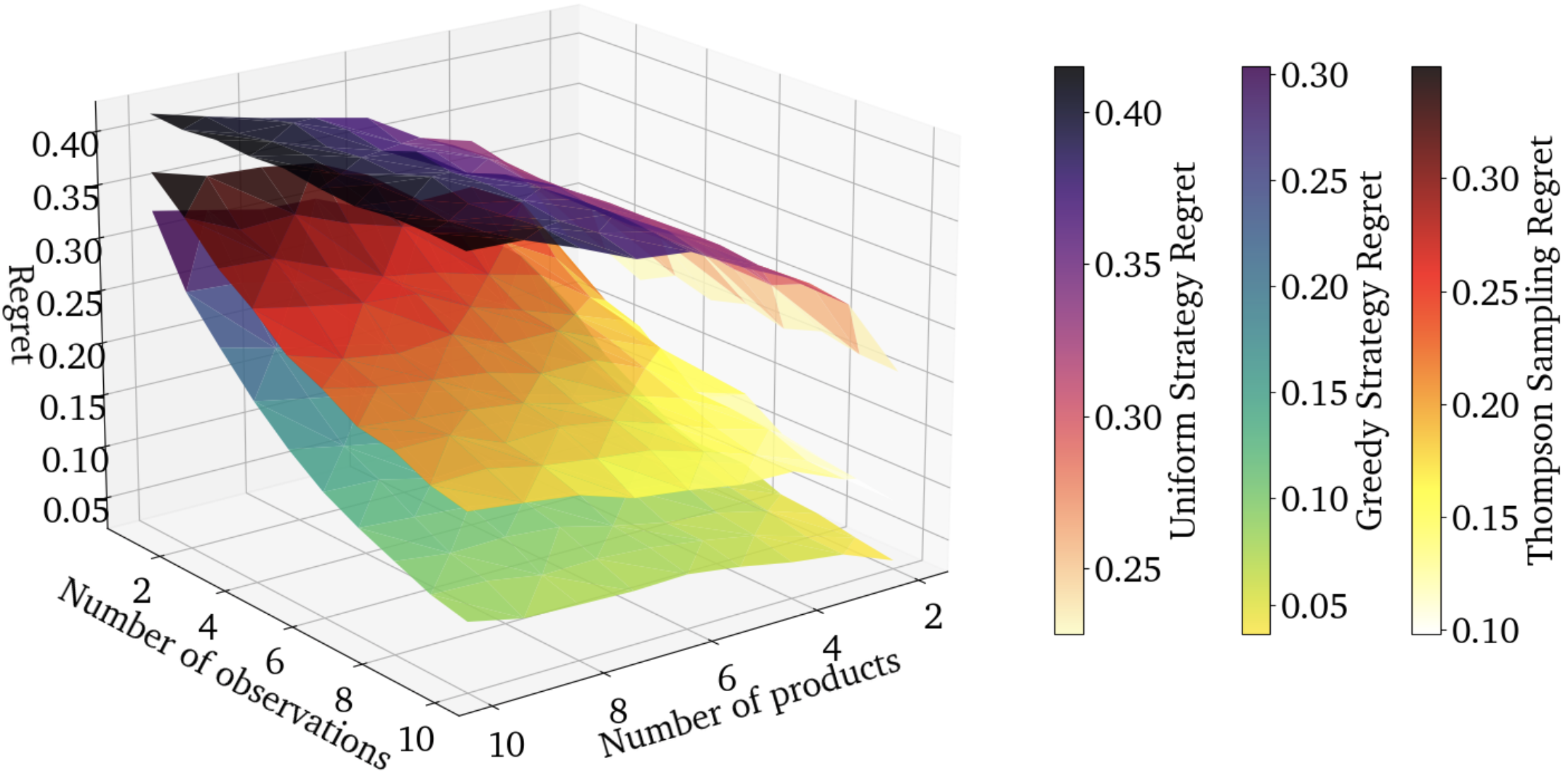}
\end{figure}
\noindent For enhanced clarity, tabulated results for Figure \ref{fig:simuresult} can be found in Appendix \ref{section:appendix_tabulated_google}.

\section{Conclusions}
\noindent We studied decision-making under complete (Knightian) uncertainty via a game between a decision maker and Nature, and analyzed the greedy strategy that chooses the product with the highest observed average rating. Our main theoretical contribution is an equal-observation benchmark: when every product has the same number of observations, we provide a finite-sample upper bound on greedy's worst-case regret and a matching lower bound for all strategies. Thus, in the equal-observations model, greedy achieves the minimax regret rate up to universal constants. In the two-product, two-rating case with one observation per product, we further characterize the exact minimax value: greedy's worst-case regret is $\frac{1}{8}$, and no strategy can do better. \newline
We also studied heterogeneous observations, where products may have different numbers of reviews. The greedy upper bound extends conservatively with the common sample size replaced by $m_{\min}$, the least number of observations across products. However, the exact behavior is no longer monotone in the same way as in the equal number of observations case. In the two-product, two-rating case, adding observations for only one product can increase greedy's worst-case regret, and the one-sided path $\mathbf m=(1,k)$ converges to regret $\frac{1}{4}$. \newline
These results give two complementary messages. When sample sizes are equal across options, choosing by the observed average rating is not only computationally simple but also minimax-rate optimal in our model. When sample sizes are heterogeneous, greedy remains controlled by a conservative $m_{\min}$ upper bound, but one-sided increases in observations can change the exact worst-case behavior.

\newpage
\bibliographystyle{plain} 
\bibliography{refs}

@article{ismail2025,
  title={Super-Nash Performance},
  author={Ismail, Mehmet S},
  journal={International Economic Review},
  volume={66},
  number={4},
  pages={1487--1503},
  year={2025},
  publisher={Wiley Online Library}
}

@book{Wald1950,
	author = {Abraham Wald},
	editor = {},
	publisher = {Wiley: New York},
	title = {Statistical Decision Functions},
	year = {1950}
}

@article{Savage1951,
	author = {Leonard J. Savage},
	journal = {Journal of the American Statistical Association},
	pages = {55--67},
	title = {{The Theory of Statistical Decision}},
	volume = {46},
	year = {1951}
}

@book{CLRS1990,
  author       = {Thomas H. Cormen and
                  Charles E. Leiserson and
                  Ronald L. Rivest and
                  Clifford Stein},
  title        = {Greedy Algorithms IV-15; Introduction to Algorithms, 3rd Edition},
  publisher    = {{MIT} Press},
  year         = {2009},
  url          = {http://mitpress.mit.edu/books/introduction-algorithms},
  isbn         = {978-0-262-03384-8},
  timestamp    = {Mon, 17 Aug 2020 11:36:12 +0200},
  biburl       = {https://dblp.org/rec/books/daglib/0023376.bib},
  bibsource    = {dblp computer science bibliography, https://dblp.org}
}

@article{Wang2023,
author = {Wang, Yizhun},
year = {2023},
month = {11},
pages = {233-239},
title = {Review on greedy algorithm},
volume = {14},
journal = {Theoretical and Natural Science},
doi = {10.54254/2753-8818/14/20241041}
}

@article{BGY2004,
title = "When the greedy algorithm fails.",
author = "J{\o}rgen Bang-Jensen and G. Gutin and A. Yeo",
year = "2004",
doi = "10.1016/j.disopt.2004.03.007",
language = "English",
volume = "1",
pages = "121--127",
journal = "Discrete Optimization",
issn = "1572-5286",
publisher = "Elsevier",
}

@article{BM2007,
author = {Blum, Avrim and Mansour, Yishay},
year = {2007},
month = {09},
pages = {79-102},
title = {Learning, Regret Minimization, and Equilibria},
isbn = {9780511800481},
journal = {Algorithmic Game Theory},
doi = {10.1017/CBO9780511800481.006}
}

@misc{MJK2022,
      title={Regret Minimization with Noisy Observations}, 
      author={Mohammad Mahdian and Jieming Mao and Kangning Wang},
      year={2022},
      eprint={2207.09435},
      archivePrefix={arXiv},
      primaryClass={cs.DS}
}

@article{YPASY2018,
author = {Yun, Donggyu and Proutiere, Alexandre and Ahn, Sumyeong and Shin, Jinwoo and Yi, Yung},
title = {Multi-armed Bandit with Additional Observations},
year = {2018},
issue_date = {March 2018},
publisher = {Association for Computing Machinery},
address = {New York, NY, USA},
volume = {2},
number = {1},
url = {https://doi.org/10.1145/3179416},
doi = {10.1145/3179416},
abstract = {We study multi-armed bandit (MAB) problems with additional observations, where in each round, the decision maker selects an arm to play and can also observe rewards of additional arms (within a given budget) by paying certain costs. In the case of stochastic rewards, we develop a new algorithm KL-UCB-AO which is asymptotically optimal when the time horizon grows large, by smartly identifying the optimal set of the arms to be explored using the given budget of additional observations. In the case of adversarial rewards, we propose H-INF, an algorithm with order-optimal regret. H-INF exploits a two-layered structure where in each layer, we run a known optimal MAB algorithm. Such a hierarchical structure facilitates the regret analysis of the algorithm, and in turn, yields order-optimal regret. We apply the framework of MAB with additional observations to the design of rate adaptation schemes in 802.11-like wireless systems, and to that of online advertisement systems. In both cases, we demonstrate that our algorithms leverage additional observations to significantly improve the system performance. We believe the techniques developed in this paper are of independent interest for other MAB problems, e.g., contextual or graph-structured MAB.},
journal = {Proc. ACM Meas. Anal. Comput. Syst.},
month = {apr},
articleno = {13},
numpages = {22},
keywords = {additional observations, budget; cost;, inf, kl-ucb, multi-armed bandit, online algorithm, reinforcement learning}
}

@inproceedings{BHJK,
author = {Bayati, Mohsen and Hamidi, Nima and Johari, Ramesh and Khosravi, Khashayar},
title = {The unreasonable effectiveness of greedy algorithms in multi-armed bandit with many arms},
year = {2020},
isbn = {9781713829546},
publisher = {Curran Associates Inc.},
address = {Red Hook, NY, USA},
abstract = {We study the structure of regret-minimizing policies in the many-armed Bayesian multi-armed bandit problem: in particular, with k the number of arms and T the time horizon, we consider the case where k ≥ √T. We first show that subsampling is a critical step for designing optimal policies. In particular, the standard UCB algorithm leads to sub-optimal regret bounds in the many-armed regime. However, a subsampled UCB (SS-UCB), which samples Θ(√T) arms and executes UCB only on that subset, is rate-optimal. Despite theoretically optimal regret, even SS-UCB performs poorly due to excessive exploration of suboptimal arms. In particular, in numerical experiments SS-UCB performs worse than a simple greedy algorithm (and its subsampled version) that pulls the current empirical best arm at every time period. We show that these insights hold even in a contextual setting, using real-world data. These empirical results suggest a novel form of free exploration in the many-armed regime that benefits greedy algorithms. We theoretically study this new source of free exploration and find that it is deeply connected to the distribution of a certain tail event for the prior distribution of arm rewards. This is a fundamentally distinct phenomenon from free exploration as discussed in the recent literature on contextual bandits, where free exploration arises due to variation in contexts. We use this insight to prove that the subsampled greedy algorithm is rate-optimal for Bernoulli bandits when k > √T, and achieves sublinear regret with more general distributions. This is a case where theoretical rate optimality does not tell the whole story: when complemented by the empirical observations of our paper, the power of greedy algorithms becomes quite evident. Taken together, from a practical standpoint, our results suggest that in applications it may be preferable to use a variant of the greedy algorithm in the many-armed regime.},
booktitle = {Proceedings of the 34th International Conference on Neural Information Processing Systems},
articleno = {145},
numpages = {11},
location = {Vancouver, BC, Canada},
series = {NIPS '20}
}

@article{JLP2021,
  author       = {Matthieu Jedor and
                  Jonathan Lou{\"{e}}dec and
                  Vianney Perchet},
  title        = {Be Greedy in Multi-Armed Bandits},
  journal      = {CoRR},
  volume       = {abs/2101.01086},
  year         = {2021},
  url          = {https://arxiv.org/abs/2101.01086},
  eprinttype    = {arXiv},
  eprint       = {2101.01086},
  timestamp    = {Thu, 21 Jan 2021 14:42:30 +0100},
  biburl       = {https://dblp.org/rec/journals/corr/abs-2101-01086.bib},
  bibsource    = {dblp computer science bibliography, https://dblp.org}
}

@book{TN16, author = {Roughgarden, Tim}, title = {No-Regret Dynamics 230-242; Twenty Lectures on Algorithmic Game Theory}, year = {2016}, isbn = {131662479X}, publisher = {Cambridge University Press}, address = {USA}, edition = {1st}, abstract = {Computer science and economics have engaged in a lively interaction over the past fifteen years, resulting in the new field of algorithmic game theory. Many problems that are central to modern computer science, ranging from resource allocation in large networks to online advertising, involve interactions between multiple self-interested parties. Economics and game theory offer a host of useful models and definitions to reason about such problems. The flow of ideas also travels in the other direction, and concepts from computer science are increasingly important in economics. This book grew out of the author's Stanford University course on algorithmic game theory, and aims to give students and other newcomers a quick and accessible introduction to many of the most important concepts in the field. The book also includes case studies on online advertising, wireless spectrum auctions, kidney exchange, and network management.} }

@inproceedings{ARS2008,
author = {Avramopoulos, Ioannis and Rexford, Jennifer and Schapire, Robert},
title = {From optimization to regret minimization and back again},
year = {2008},
publisher = {USENIX Association},
address = {USA},
abstract = {Internet routing is mostly based on static information-- it's dynamicity is limited to reacting to changes in topology. Adaptive performance-based routing decisions would not only improve the performance itself of the Internet but also its security and availability. However, previous approaches for making Internet routing adaptive based on optimizing network-wide objectives are not suited for an environment in which autonomous and possibly malicious entities interact.In this paper, we propose a different framework for adaptive routing decisions based on regret-minimizing online learning algorithms. These algorithms, as applied to routing, are appealing because adopters can independently improve their own performance while being robust to adversarial behavior. However, in contrast to approaches based on optimization theory that provide guarantees from the outset about network-wide behavior, the network-wide behavior if online learning algorithms were to interact with each other is less understood. In this paper, we study this interaction in a realistic Internet environment, and find that the outcome is a stable state and that the optimality gap with respect to the network-wide optimum is small. Our findings suggest that online learning may be a suitable framework for adaptive routing decisions in the Internet.},
booktitle = {Proceedings of the Third Conference on Tackling Computer Systems Problems with Machine Learning Techniques},
pages = {7},
numpages = {1},
location = {San Diego, California},
series = {SysML'08}
}

@techreport{Flores2022,
title = "Learning by Convex Combination",
abstract = "We study how an agent evaluates the optimality of an action when she only observes a sample of its outcomes, not the outcome distribution. We characterize a model where the agent assigns an ex-ante utility to the action and then, upon seeing the sample, “updates” her evaluation by taking a convex combination of this ex-ante utility and the average utility of the outcomes in the sample. The weight on the average utility in this convex combination increases with sample size. Asymptotically, actions are evaluated using their sample average utility. The model includes Bayesian benchmarks as special cases. More generally, it describes an agent that may learn imperfectly yet consistently with a rough intuitive understanding of the Law of Large Numbers; it also enables decision-theoretic deﬁnitions of important concepts in the descriptive study of probabilistic judgement.",
keywords = "Sample, Sample size, Learning, Uncertainty, Sample, Sample size, Learning, Uncertainty",
author = "Karol Flores-Szwagrzak",
year = "2022",
language = "English",
series = "Working Paper / Department of Economics. Copenhagen Business School",
publisher = "Copenhagen Business School",
number = "16-2022",
address = "Denmark",
type = "Working Paper",
institution = "Copenhagen Business School",
}

@misc{YHLZM2022,
      title={Personalized Showcases: Generating Multi-Modal Explanations for Recommendations}, 
      author={An Yan and Zhankui He and Jiacheng Li and Tianyang Zhang and Julian McAuley},
      year={2023},
      eprint={2207.00422},
      archivePrefix={arXiv},
      primaryClass={cs.IR}
}

@article{ACF2002,
author = {Auer, Peter and Cesa-Bianchi, Nicolò and Fischer, Paul},
year = {2002},
month = {05},
pages = {235-256},
title = {Finite-time Analysis of the Multiarmed Bandit Problem},
volume = {47},
journal = {Machine Learning},
doi = {10.1023/A:1013689704352}
}

@article{RRKO2017,
  author       = {Daniel Russo and
                  Benjamin Van Roy and
                  Abbas Kazerouni and
                  Ian Osband},
  title        = {{A Tutorial on Thompson Sampling}},
  journal      = {CoRR},
  volume       = {abs/1707.02038},
  year         = {2017},
  url          = {http://arxiv.org/abs/1707.02038},
  eprinttype    = {arXiv},
  eprint       = {1707.02038},
  timestamp    = {Mon, 13 Aug 2018 16:46:49 +0200},
  biburl       = {https://dblp.org/rec/journals/corr/0001RKO17.bib},
  bibsource    = {dblp computer science bibliography, https://dblp.org}
}

@article{BMS2011,
  title   = {Pure exploration in finitely-armed and continuous-armed bandits},
  author  = {Bubeck, S{\'e}bastien and Munos, R{\'e}mi and Stoltz, Gilles},
  journal = {Theoretical Computer Science},
  volume  = {412},
  number  = {19},
  pages   = {1832--1852},
  year    = {2011},
  doi     = {10.1016/j.tcs.2010.12.059}
}

@inproceedings{kannan2018smoothed,
  title={A Smoothed Analysis of the Greedy Algorithm for the Linear Contextual Bandit Problem},
  author={Kannan, Sampath and Morgenstern, Jamie H. and Roth, Aaron and Waggoner, Bo and Wu, Zhiwei Steven},
  booktitle={Advances in Neural Information Processing Systems},
  volume={31},
  year={2018}
}

@article{kitagawa2018who,
  author  = {Toru Kitagawa and Aleksey Tetenov},
  title   = {Who Should Be Treated? Empirical Welfare Maximization Methods for Treatment Choice},
  journal = {Econometrica},
  volume  = {86},
  number  = {2},
  pages   = {591--616},
  year    = {2018}
}

@article{manski2004statistical,
  author  = {Charles F. Manski},
  title   = {Statistical Treatment Rules for Heterogeneous Populations},
  journal = {Econometrica},
  volume  = {72},
  number  = {4},
  pages   = {1221--1246},
  year    = {2004}
}

@article{renou2010,
  title={Minimax regret and strategic uncertainty},
  author={Renou, Ludovic and Schlag, Karl H},
  journal={Journal of Economic Theory},
  volume={145},
  number={1},
  pages={264--286},
  year={2010},
  publisher={Elsevier}
}

@techreport{schlag2006eleven,
  author      = {Schlag, Karl H.},
  title       = {{ELEVEN}---Tests Needed for a Recommendation},
  institution = {European University Institute},
  type        = {Economics Working Paper},
  number      = {ECO 2006/2},
  address     = {Florence, Italy},
  year        = {2006},
  url         = {https://ideas.repec.org/p/eui/euiwps/eco2006-2.html}
}

@article{stoye2009minimax,
  author  = {Stoye, J{\"o}rg},
  title   = {Minimax Regret Treatment Choice with Finite Samples},
  journal = {Journal of Econometrics},
  year    = {2009},
  volume  = {151},
  number  = {1},
  pages   = {70--81},
  month   = jul,
  doi     = {10.1016/j.jeconom.2009.02.013}
}

@article{chen2026minimax,
  author  = {Chen, Haoning and Guggenberger, Patrik},
  title   = {A Note on Minimax Regret Rules with Multiple Treatments in Finite Samples},
  journal = {Econometric Theory},
  year    = {2026},
  volume  = {42},
  number  = {3},
  pages   = {723--749},
  month   = jun,
  doi     = {10.1017/S0266466625000040}
}

\newpage
\appendix

\section{Decision maker's strategy lower bound regret proofs}
\subsection{KL divergence equivalence for the Bernoulli and Binomial distribution for the two product case}\label{section:kl_equals}
The following is folklore and we include it for the sake of completeness. 
\begin{lemma}[Binomial KL equals $m$ times Bernoulli KL]\label{lem:binom_bern_kl}
For any $m\in\mathbb{N}$ and any $p,q\in(0,1)$,
\[
\mathrm{KL}(\mathrm{Binomial}(m,p)\|\mathrm{Binomial}(m,q))
=m\,\mathrm{KL}(\mathrm{Bern}(p)\|\mathrm{Bern}(q)).
\]
\end{lemma}

\begin{proof}
For $x\in\{0,1,\ldots,m\}$,
\[
\mathrm{Binomial}(m,p)(x)=\binom{m}{x}p^x(1-p)^{m-x}.
\]
Hence
\[
\mathrm{KL}(\mathrm{Binomial}(m,p)\|\mathrm{Binomial}(m,q))
=\sum_{x=0}^m\binom{m}{x}p^x(1-p)^{m-x}\log \left(\frac{p^x(1-p)^{m-x}}{q^x(1-q)^{m-x}}\right).
\]
The binomial coefficient cancels inside the logarithm, giving
\[
=\sum_{x=0}^m\binom{m}{x}p^x(1-p)^{m-x}\left(x\log\left(\frac{p}{q}\right)+(m-x)\log\left(\frac{1-p}{1-q}\right)\right).
\]
Let $X\sim\mathrm{Binomial}(m,p)$.
Then $\mathbb{E}[X]=mp$ and $\mathbb{E}[m-X]=m(1-p)$, so the last display equals
\[
mp\log\left(\frac{p}{q}\right)+m(1-p)\log\left(\frac{1-p}{1-q}\right)
=m\left(p\log\left(\frac{p}{q}\right)+(1-p)\log \left(\frac{1-p}{1-q}\right)\right)
=m\,\mathrm{KL}(\mathrm{Bern}(p)\|\mathrm{Bern}(q)).
\]
\end{proof}

\section{Tabulated Results for Google Reviews Experiments}\label{section:appendix_tabulated_google}
Tabulated results are provided in this section to complement the result from Figure \ref{fig:simuresult}. Note, row indices are number of observations, and column indices are number of products and per the Google reviews standard the number of rating levels for all experiments is $n_r=5$.

\begin{table}[h]
\centering
\resizebox{\textwidth}{!}{
\begin{tabular}{|c|c|c|c|c|c|c|c|c|c|}
\hline
$m,n_d$ & 2 & 3 & 4 & 5 & 6 & 7 & 8 & 9 & 10 \\ \hline
1  & $0.14 \pm 0.22$ & $0.20 \pm 0.20$ & $0.24 \pm 0.20$ & $0.26 \pm 0.17$ & $0.28 \pm 0.16$ & $0.31 \pm 0.16$ & $0.31 \pm 0.15$ & $0.32 \pm 0.13$ & $0.33 \pm 0.13$ \\ \hline
2  & $0.11 \pm 0.20$ & $0.14 \pm 0.18$ & $0.18 \pm 0.18$ & $0.21 \pm 0.17$ & $0.22 \pm 0.17$ & $0.24 \pm 0.15$ & $0.25 \pm 0.14$ & $0.26 \pm 0.14$ & $0.27 \pm 0.13$ \\ \hline
3  & $0.09 \pm 0.17$ & $0.12 \pm 0.17$ & $0.15 \pm 0.16$ & $0.17 \pm 0.16$ & $0.18 \pm 0.15$ & $0.19 \pm 0.15$ & $0.20 \pm 0.15$ & $0.21 \pm 0.15$ & $0.23 \pm 0.14$ \\ \hline
4  & $0.06 \pm 0.14$ & $0.11 \pm 0.16$ & $0.12 \pm 0.15$ & $0.13 \pm 0.16$ & $0.15 \pm 0.15$ & $0.16 \pm 0.15$ & $0.17 \pm 0.14$ & $0.18 \pm 0.14$ & $0.19 \pm 0.15$ \\ \hline
5  & $0.05 \pm 0.12$ & $0.09 \pm 0.14$ & $0.11 \pm 0.15$ & $0.12 \pm 0.14$ & $0.14 \pm 0.15$ & $0.14 \pm 0.14$ & $0.14 \pm 0.14$ & $0.15 \pm 0.13$ & $0.16 \pm 0.15$ \\ \hline
6  & $0.06 \pm 0.13$ & $0.07 \pm 0.13$ & $0.09 \pm 0.14$ & $0.10 \pm 0.13$ & $0.11 \pm 0.14$ & $0.13 \pm 0.14$ & $0.12 \pm 0.13$ & $0.14 \pm 0.14$ & $0.14 \pm 0.13$ \\ \hline
7  & $0.04 \pm 0.11$ & $0.06 \pm 0.12$ & $0.08 \pm 0.12$ & $0.09 \pm 0.11$ & $0.10 \pm 0.13$ & $0.10 \pm 0.12$ & $0.11 \pm 0.12$ & $0.12 \pm 0.13$ & $0.12 \pm 0.12$ \\ \hline
8  & $0.04 \pm 0.11$ & $0.06 \pm 0.11$ & $0.07 \pm 0.11$ & $0.07 \pm 0.11$ & $0.08 \pm 0.11$ & $0.10 \pm 0.12$ & $0.10 \pm 0.12$ & $0.10 \pm 0.11$ & $0.11 \pm 0.12$ \\ \hline
9  & $0.03 \pm 0.09$ & $0.05 \pm 0.10$ & $0.06 \pm 0.10$ & $0.07 \pm 0.11$ & $0.08 \pm 0.11$ & $0.08 \pm 0.11$ & $0.08 \pm 0.11$ & $0.09 \pm 0.11$ & $0.09 \pm 0.11$ \\ \hline
10 & $0.03 \pm 0.07$ & $0.05 \pm 0.09$ & $0.05 \pm 0.10$ & $0.06 \pm 0.10$ & $0.07 \pm 0.10$ & $0.07 \pm 0.09$ & $0.07 \pm 0.10$ & $0.07 \pm 0.10$ & $0.08 \pm 0.10$ \\ \hline
\end{tabular}
}
\caption{Mean regret $\pm$ standard deviation of the greedy strategy on the Google Reviews data for restaurants.}
\label{tab:google-greedy-std}
\end{table}

\begin{table}[h]
\centering
\resizebox{\textwidth}{!}{
\begin{tabular}{|c|c|c|c|c|c|c|c|c|c|}
\hline
$m,n_d$ & 2 & 3 & 4 & 5 & 6 & 7 & 8 & 9 & 10 \\ \hline
1  & $0.18 \pm 0.33$ & $0.26 \pm 0.36$ & $0.28 \pm 0.37$ & $0.28 \pm 0.34$ & $0.32 \pm 0.34$ & $0.34 \pm 0.37$ & $0.35 \pm 0.36$ & $0.35 \pm 0.34$ & $0.36 \pm 0.37$ \\ \hline
2  & $0.16 \pm 0.32$ & $0.22 \pm 0.33$ & $0.24 \pm 0.30$ & $0.27 \pm 0.32$ & $0.28 \pm 0.33$ & $0.30 \pm 0.33$ & $0.32 \pm 0.35$ & $0.31 \pm 0.32$ & $0.33 \pm 0.32$ \\ \hline
3  & $0.15 \pm 0.27$ & $0.19 \pm 0.28$ & $0.22 \pm 0.29$ & $0.24 \pm 0.28$ & $0.25 \pm 0.29$ & $0.26 \pm 0.28$ & $0.28 \pm 0.30$ & $0.30 \pm 0.30$ & $0.30 \pm 0.30$ \\ \hline
4  & $0.12 \pm 0.24$ & $0.18 \pm 0.27$ & $0.20 \pm 0.25$ & $0.22 \pm 0.27$ & $0.23 \pm 0.26$ & $0.26 \pm 0.26$ & $0.26 \pm 0.26$ & $0.27 \pm 0.28$ & $0.28 \pm 0.29$ \\ \hline
5  & $0.11 \pm 0.23$ & $0.17 \pm 0.24$ & $0.19 \pm 0.26$ & $0.21 \pm 0.26$ & $0.21 \pm 0.24$ & $0.23 \pm 0.26$ & $0.23 \pm 0.24$ & $0.23 \pm 0.24$ & $0.25 \pm 0.25$ \\ \hline
6  & $0.13 \pm 0.25$ & $0.14 \pm 0.22$ & $0.18 \pm 0.23$ & $0.20 \pm 0.24$ & $0.20 \pm 0.23$ & $0.21 \pm 0.24$ & $0.22 \pm 0.22$ & $0.22 \pm 0.25$ & $0.23 \pm 0.24$ \\ \hline
7  & $0.10 \pm 0.21$ & $0.14 \pm 0.22$ & $0.16 \pm 0.22$ & $0.18 \pm 0.22$ & $0.18 \pm 0.21$ & $0.20 \pm 0.22$ & $0.20 \pm 0.21$ & $0.21 \pm 0.21$ & $0.21 \pm 0.22$ \\ \hline
8  & $0.10 \pm 0.20$ & $0.14 \pm 0.22$ & $0.15 \pm 0.21$ & $0.17 \pm 0.21$ & $0.18 \pm 0.23$ & $0.19 \pm 0.22$ & $0.19 \pm 0.21$ & $0.21 \pm 0.22$ & $0.21 \pm 0.21$ \\ \hline
9  & $0.09 \pm 0.18$ & $0.12 \pm 0.20$ & $0.14 \pm 0.19$ & $0.16 \pm 0.20$ & $0.17 \pm 0.20$ & $0.18 \pm 0.20$ & $0.18 \pm 0.20$ & $0.19 \pm 0.21$ & $0.19 \pm 0.20$ \\ \hline
10 & $0.09 \pm 0.18$ & $0.12 \pm 0.19$ & $0.14 \pm 0.19$ & $0.14 \pm 0.19$ & $0.15 \pm 0.19$ & $0.16 \pm 0.19$ & $0.17 \pm 0.20$ & $0.18 \pm 0.20$ & $0.18 \pm 0.19$ \\ \hline
\end{tabular}
}
\caption{Mean regret $\pm$ standard deviation of the Thompson Sampling strategy on the Google Reviews data for restaurants.}
\label{tab:google-thompson-std}
\end{table}

\begin{table}[h]
\centering
\resizebox{\textwidth}{!}{
\begin{tabular}{|c|c|c|c|c|c|c|c|c|c|}
\hline
$m,n_d$ & 2 & 3 & 4 & 5 & 6 & 7 & 8 & 9 & 10 \\ \hline
1  & $0.23 \pm 0.23$ & $0.29 \pm 0.21$ & $0.36 \pm 0.21$ & $0.37 \pm 0.19$ & $0.38 \pm 0.16$ & $0.40 \pm 0.17$ & $0.41 \pm 0.15$ & $0.42 \pm 0.14$ & $0.43 \pm 0.14$ \\ \hline
2  & $0.21 \pm 0.22$ & $0.30 \pm 0.22$ & $0.34 \pm 0.20$ & $0.36 \pm 0.17$ & $0.39 \pm 0.17$ & $0.40 \pm 0.16$ & $0.42 \pm 0.16$ & $0.42 \pm 0.15$ & $0.44 \pm 0.14$ \\ \hline
3  & $0.22 \pm 0.23$ & $0.29 \pm 0.21$ & $0.34 \pm 0.20$ & $0.35 \pm 0.18$ & $0.39 \pm 0.18$ & $0.40 \pm 0.16$ & $0.41 \pm 0.15$ & $0.42 \pm 0.14$ & $0.44 \pm 0.14$ \\ \hline
4  & $0.22 \pm 0.23$ & $0.28 \pm 0.20$ & $0.33 \pm 0.20$ & $0.36 \pm 0.17$ & $0.39 \pm 0.17$ & $0.40 \pm 0.16$ & $0.41 \pm 0.15$ & $0.43 \pm 0.15$ & $0.43 \pm 0.14$ \\ \hline
5  & $0.22 \pm 0.23$ & $0.30 \pm 0.22$ & $0.33 \pm 0.20$ & $0.37 \pm 0.18$ & $0.38 \pm 0.17$ & $0.40 \pm 0.16$ & $0.41 \pm 0.15$ & $0.43 \pm 0.15$ & $0.44 \pm 0.14$ \\ \hline
6  & $0.21 \pm 0.23$ & $0.29 \pm 0.22$ & $0.33 \pm 0.20$ & $0.36 \pm 0.18$ & $0.37 \pm 0.16$ & $0.38 \pm 0.15$ & $0.42 \pm 0.16$ & $0.41 \pm 0.15$ & $0.43 \pm 0.14$ \\ \hline
7  & $0.22 \pm 0.24$ & $0.29 \pm 0.21$ & $0.35 \pm 0.21$ & $0.37 \pm 0.19$ & $0.39 \pm 0.17$ & $0.40 \pm 0.16$ & $0.41 \pm 0.15$ & $0.43 \pm 0.15$ & $0.43 \pm 0.14$ \\ \hline
8  & $0.20 \pm 0.21$ & $0.30 \pm 0.22$ & $0.34 \pm 0.20$ & $0.37 \pm 0.18$ & $0.38 \pm 0.18$ & $0.40 \pm 0.16$ & $0.41 \pm 0.16$ & $0.42 \pm 0.15$ & $0.44 \pm 0.14$ \\ \hline
9  & $0.22 \pm 0.22$ & $0.29 \pm 0.22$ & $0.34 \pm 0.20$ & $0.36 \pm 0.18$ & $0.39 \pm 0.17$ & $0.39 \pm 0.15$ & $0.40 \pm 0.15$ & $0.43 \pm 0.15$ & $0.43 \pm 0.13$ \\ \hline
10 & $0.22 \pm 0.25$ & $0.29 \pm 0.21$ & $0.34 \pm 0.20$ & $0.37 \pm 0.19$ & $0.38 \pm 0.17$ & $0.40 \pm 0.16$ & $0.41 \pm 0.15$ & $0.43 \pm 0.15$ & $0.43 \pm 0.14$ \\ \hline
\end{tabular}
}
\caption{Mean regret $\pm$ standard deviation of the uniform strategy on the Google Reviews data for restaurants.}
\label{tab:google-uniform-std}
\end{table}

All simulations were run locally on a 2019 15-inch MacBook Pro with a 2.4 GHz 8-core Intel Core i9 processor and 32 GB of RAM. The full set of 135,000 simulations for the reported grid completed in approximately 10--15 minutes.

\end{document}